\newtcolorbox[auto counter, number within=section]{mybox}[2][]{
  float, floatplacement=tp,
  colback=blue!5!white, colframe=blue!75!black,
  fonttitle=\bfseries, center title,
  title=Box~\thetcbcounter: #2, #1
}
\newcounter{mybox}
\newcommand{\ket}[1]{|#1\rangle}
\newcommand{\bra}[1]{\langle#1|}
\newcommand{\inn}[2]{\langle#1|#2\rangle}
\theoremstyle{plain}
\newtheorem{theorem}{Theorem}
\newtheorem{lemma}{Lemma}
\theoremstyle{definition}
\newtheorem{definition}{Definition}
\definecolor{natureblue}{RGB}{235, 245, 255} 
\definecolor{naturegrey}{RGB}{245, 245, 245} 
\newtcolorbox{protocolbox}[1]{
    colback=white, 
    colframe=gatedark!50, 
    fonttitle=\bfseries\sffamily,
    coltitle=white,
    colbacktitle=gatedark!80,
    enhanced,
    attach boxed title to top left={yshift=-2mm, xshift=3mm},
    title=#1,
    arc=2mm, 
    boxrule=0.8pt,
}
\definecolor{gateblue}{RGB}{210, 235, 250} 
\definecolor{gatedark}{RGB}{40, 100, 150} 
\tikzset{
    operator/.append style={fill=gateblue, draw=gatedark, thick},
    meter/.append style={fill=gray!10, draw=black},
    phase/.append style={fill=gatedark}
}
\begin{document}

\title{Near optimal quantum algorithm for estimating Shannon entropy}

\author{Myeongjin Shin\href{https://orcid.org/0000-0002-8246-6300}{\includegraphics[scale=0.05]{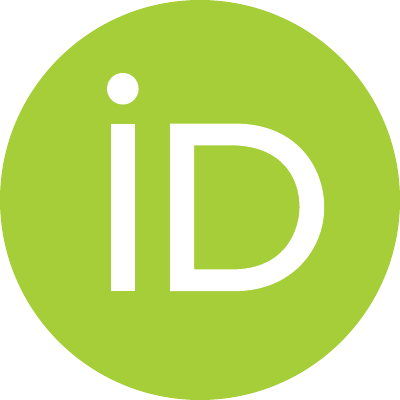}}
}
\email{hanwoolmj@kaist.ac.kr}
\affiliation{School of Computing, KAIST, Daejeon 34141, Korea}
\affiliation{Team QST, Seoul National University, Seoul 08826, Korea}

\author{Kabgyun Jeong\href{https://orcid.org/0000-0001-7628-7835}{\includegraphics[scale=0.05]{orcidid.pdf}}
}
\email{kgjeong6@snu.ac.kr}
\affiliation{Research Institute of Mathematics, Seoul National University, Seoul 08826, Korea}
\affiliation{School of Computational Sciences, Korea Institute for Advanced Study, Seoul 02455, Korea}
\affiliation{Team QST, Seoul National University, Seoul 08826, Korea}

\begin{abstract}
We present a near-optimal quantum algorithm, up to logarithmic factors, for estimating the Shannon entropy in the quantum probability oracle model. Our approach combines the singular value separation algorithm with quantum amplitude amplification, followed by the application of quantum singular value transformation. On the lower bound side, we construct probability distributions encoded via Hamming weights in the oracle, establishing a tight query lower bound up to logarithmic factors. Consequently, our results show that the tight query complexity for estimating the Shannon entropy within $\epsilon$-additive error is given by $\tilde{\Theta}\left(\tfrac{\sqrt{n}}{\epsilon}\right)$.
\end{abstract}

\maketitle



\section{Introduction} \label{sec: intro}
\begin{table*}[t]
    \centering
    \caption{\textbf{Comparison of complexity bounds for Shannon entropy estimation.} The results are summarized based on the quantum probability oracle model. Bold rows indicate the contributions of this work.}
    \label{tab:summary}
    \small
    \renewcommand{\arraystretch}{1.4} 
    \begin{tabular}{@{} llll @{}}
        \toprule
        \textbf{Reference} & \textbf{Complexity} & \textbf{Type} & \textbf{Method} \\ 
        \midrule
        \rowcolor{natureblue} 
        \textbf{This paper} & $\mathcal{\tilde{O}}(\sqrt{n}/\epsilon)$ & Upper bound & Quantum singular value separation \& QSVT \\ 
        Ref.~\cite{li2018quantum} & $\mathcal{\tilde{O}}(\sqrt{n}/\epsilon^2)$ & Upper bound & Quantum Monte Carlo method \\ 
        Ref.~\cite{gilyen2019distributional} & $\mathcal{\tilde{O}}(\sqrt{n}/\epsilon^{1.5})$ & Upper bound & Quantum singular value transformation (QSVT) \\ 
        \midrule
        \rowcolor{natureblue} 
        \textbf{This paper} & $\Omega(\sqrt{n}/\epsilon)$ & Lower bound & Quantum oracle composition \& Polynomial method \\ 
        Ref.~\cite{bun2018polynomial} & $\Omega(\sqrt{n})$ & Lower bound & Polynomial method \\ 
        \bottomrule
    \end{tabular}
\end{table*}
Random processes arise in various scientific domains, including statistical physics, information theory, and machine learning, where they provide fundamental tools for analysis and prediction. Each event in a random process occurs with a certain probability. Accurately estimating these probabilities or computing information-theoretic quantities derived from them constitutes a fundamental challenge across both classical and quantum settings. Among such quantities, the Shannon entropy~\cite{bromiley2004shannon} plays a central role in characterizing randomness, quantifying information content, and finding applications in areas such as thermodynamics. Thus, efficient entropy estimation is of both theoretical and practical importance. In this work, we propose an efficient quantum algorithm for estimating the Shannon entropy within the quantum probability oracle model.

To investigate the potential advantage of quantum models for analyzing random processes, previous works have proposed various quantum frameworks~\cite{gilyen2019distributional, van2021quantum} and examined their benefits in comparison to the classical probability sampling model. Among these, the quantum probability oracle model has emerged as the most widely used and well-established framework.

\begin{definition}[Quantum probability oracle]\label{def:qoracle}
    Let $p$ be a $n$-dimensional probability distribution. We say that $O_p$ is a quantum probability oracle for $p$ if
    \begin{equation}
        \ket{\psi}_p = O_p\ket{0} = \sum_{i=1}^n \sqrt{p_i}\ket{i}\ket{\psi_i}
    \end{equation}
    for some orthogonal quantum states $\{\ket{\psi_i}\}_{i=1}^n$.
\end{definition}

This model is the quantum analogue of the classical sampling model, since applying the oracle $O_p$ to the state $\ket{0}$ and measuring the first register is equivalent to drawing a sample from the distribution $p$. We refer to applying a quantum state to the oracle $O_p$ as making a ``query" to the quantum oracle. A key distinguishing feature of the quantum model, compared to the classical one, is that it also allows queries to the inverse oracle $O_p^{\dagger}$ (as well as controlled operations), thereby enabling genuine quantum speedups and advantages. 

An notable example of such a speedup is provided by the quantum amplitude estimation technique. In the classical setting, estimating a probability $p_i$ to within an additive error $\epsilon$ with success probability $1-\delta$ requires $\Theta\big(\frac{\ln(1/\delta)}{\epsilon^2}\big)$ samples~\cite{dagum2000optimal}. In contrast, given access to a quantum probability oracle, the amplitude estimation algorithm~\cite{brassard2000quantum} reduces this to $\mathcal{O}\big(\frac{\ln(1/\delta)}{\epsilon}\big)$ applications of $O_p$ and its inverse $O_p^{\dagger}$. Recent work has further shown that the availability of the inverse oracle $O_p^{\dagger}$ is essential to achieve this quantum advantage~\cite{tang2025amplitude}.

In this paper, we establish the tight complexity bound (up to logarithmic factors) for estimating the Shannon entropy in the quantum probability oracle model. For a discrete distribution $p=(p_i)^n_{i=1}$ supported on $[n]$, Shannon entropy is defined as:
\begin{equation} \label{eq: shannon}
    H(p) = -\sum_{i=1}^n p_i\log p_i.
\end{equation}

The tight classical sample complexity for estimating $H(p)$ within additive error $\epsilon$ is $\mathcal{O}(\frac{n}{\epsilon\log n}+\frac{(\log n)^2}{\epsilon^2})$~\cite{wu2016minimax}. Quantum algorithms that exploit the power of the quantum probability oracle can surpass this classical bound. Initially, a quantum query complexity $\tilde{\mathcal{O}}(\frac{\sqrt{n}}{\epsilon^2})$ was established for $H(p)$ estimation~\cite{li2018quantum}, which uses the quantum Monte Carlo method~\cite{montanaro2015quantum}. Later, employing quantum singular value transformation (QSVT), this complexity was improved to $\tilde{\mathcal{O}}(\frac{\sqrt{n}}{\epsilon^{1.5}})$, which has since become well-known ``folklore" upper bound~\cite{gilyen2019distributional}. We present an algorithm that surpasses the folklore quantum query complexity by integrating singular value separation with quantum amplitude amplification. We acknowledge that our algorithm was inspired by the framework of variable time amplitude estimation, which was previously applied to R\'{e}nyi entropy estimation~\cite{wang2024quantum}.

The folklore complexity matches the known lower bound of $\Omega(\sqrt{n})$ in terms of $n$. Specifically, the lower bound for estimating $H(p)$ to within a constant additive error is $\Omega(\sqrt{n})$, established via the polynomial method~\cite{bun2018polynomial}. However, to the best of our knowledge, no lower bounds have been proven for estimation with arbitrarily small additive error $\epsilon$. In this work, we provide the first such lower bound for arbitrary $\epsilon$, thus closing this gap in the quantum query complexity of Shannon entropy estimation. Our main result is summarized in Theorem~\ref{thm: main}, which establishes the tight bound for estimating Shannon entropy.

\begin{theorem}[Main Theorem]\label{thm: main}
    Let $p$ be a $n$-dimensional probability distribution. Given a quantum probability oracle $O_p$ for $p$, estimating the Shannon entropy $H(p)$ within $\epsilon$-additive error involves $\tilde{\Theta}(\frac{\sqrt{n}}{\epsilon})$ queries of $O_p$ and $O_p^\dagger$.
\end{theorem}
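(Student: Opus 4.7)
The plan is to prove matching bounds $\tilde{O}(\sqrt{n}/\epsilon)$ and $\tilde{\Omega}(\sqrt{n}/\epsilon)$, with the upper bound improving the folklore $\tilde{O}(\sqrt{n}/\epsilon^{1.5})$ by a $\sqrt{\epsilon}$ factor through a bucketing scheme based on singular value separation, and the lower bound proceeding via a Hamming-weight encoding combined with the polynomial method.

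For the upper bound, I would decompose $H(p) = \sum_{i=1}^n p_i \log(1/p_i)$ by bucketing indices according to magnitude: for $k = 0, 1, \ldots, K$ with $K = O(\log(n/\epsilon))$, let $B_k = \{i : p_i \in [2^{-k-1}, 2^{-k})\}$, and discard indices with $p_i < \epsilon/(n \log n)$ whose total contribution to $H(p)$ is $O(\epsilon)$. Each surviving bucket is processed by three nested operations. First, apply singular value separation: a QSVT polynomial approximating the indicator of $[\sqrt{2^{-k-1}}, \sqrt{2^{-k}})$ applied to the block encoding of $\mathrm{diag}(\sqrt{p_i})$ (implicit in $O_p, O_p^\dagger$) produces a subnormalized projection onto $B_k$, at cost $\tilde{O}(2^{k/2})$ queries per application. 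Second, use quantum amplitude amplification to boost the $\sqrt{q_k}$ amplitude (with $q_k = \sum_{i \in B_k} p_i$) of the in-bucket subspace to a constant, costing $\tilde{O}(1/\sqrt{q_k})$ rounds. Third, use amplitude estimation on a second QSVT block implementing $-\log p_i$ within the bucket (low-degree since $-\log p_i$ varies by only $\log 2$ across $B_k$) to obtain $\sum_{i \in B_k} -p_i \log p_i$ within additive error $\epsilon/K$. Multiplying through, the per-bucket cost is $\tilde{O}(k \sqrt{q_k} \cdot 2^{k/2}/\epsilon)$, and using $q_k \le |B_k| \cdot 2^{-k}$ together with Cauchy--Schwarz on $\sum_k k \sqrt{|B_k|} \le \tilde{O}(\sqrt{n})$ yields total query complexity $\tilde{O}(\sqrt{n}/\epsilon)$.

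For the lower bound, I would construct a family of distributions $\{p_x\}_{x \in \{0,1\}^N}$ encoded via the Hamming weight of $x$, with the properties that (a) a quantum probability oracle $O_{p_x}$ can be simulated from $O(1)$ queries of a standard bit oracle for $x$, and (b) the map $|x| \mapsto H(p_x)$ is a smooth function whose slope is large enough that an $\epsilon$-precise entropy estimate forces a matching precision in $|x|$. A symmetrization argument on the polynomial representing the algorithm's acceptance probability (as a function of the bits of $x$) reduces the analysis to a univariate polynomial in $|x|$; applying Paturi-type degree lower bounds for polynomials approximating the induced entropy profile on the relevant range then yields $T = \tilde{\Omega}(\sqrt{n}/\epsilon)$ queries.

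The principal obstacle, for the upper bound, is calibrating the singular value separation so that the QSVT polynomial is simultaneously sharp enough that mass leakage between adjacent buckets sums to $O(\epsilon)$ across all $K$ buckets, yet low enough in degree that the $2^{k/2}$ factor is not amplified beyond what Cauchy--Schwarz can absorb, together with careful error propagation through the nested amplitude amplification and estimation. On the lower-bound side, the main difficulty is designing a Hamming-weight encoding whose entropy slope is large enough to deliver the $1/\epsilon$ factor (a naive additive perturbation of the uniform distribution encodes too weak a slope and yields only $\tilde{\Omega}(1/\epsilon)$) while simultaneously admitting an efficient probability-oracle simulation from the bit oracle and preserving the structure required for a symmetrized polynomial degree argument.
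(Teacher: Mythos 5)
Your upper bound is essentially the paper's argument: dyadic bucketing of the probabilities via QSVT-based singular value separation at cost $\tilde{\mathcal{O}}(1/\sqrt{p_i})=\tilde{\mathcal{O}}(2^{k/2})$ per level, amplitude amplification of the in-bucket mass at cost $\tilde{\mathcal{O}}(1/\sqrt{q_k})$, a second QSVT block for (the square root of) the logarithm, and amplitude estimation to precision $\epsilon/K$ per bucket, with the accounting $\sum_k \sqrt{q_k}\,2^{k/2}\le\tilde{\mathcal{O}}(\sqrt{n})$ matching the paper's bound $\mathrm{Sum}(k)\le 4n/4^k$. The leakage calibration you flag as the principal obstacle is real, and the paper resolves it not with a sharp interval indicator (which would need too high a degree) but with a cascade of one-sided thresholds at $\varphi_j$ and $2\varphi_j$, accepting that each $p_i$ is split between two adjacent buckets and arranging that both buckets' log-polynomials approximate $\log(2/x)$ on the overlap (its Lemma on $B_j(x)^2+B_{j+1}(x)^2\ge 1-\mathcal{O}(j\epsilon^2)$ and Theorem on the estimator $v$). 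You would need to supply that resolution, but the route is the same.

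The lower bound contains a genuine gap. You propose a single string $x\in\{0,1\}^N$ whose Hamming weight $|x|$ parametrizes the distribution, followed by symmetrization to a univariate polynomial in $|x|$ and a Paturi-type degree bound. A construction in which the entropy depends only on $|x|$ reduces entropy estimation to approximate counting of a single Hamming weight, and no choice of slope can make this yield $\Omega(\sqrt{n}/\epsilon)$: the $\sqrt{n}$ factor in the constant-error bound of Bun--Kothari--Thaler comes from the combinatorial hardness of a distribution supported on $n$ \emph{distinct} outcomes whose histogram must be resolved, and that structure is destroyed once the instance is symmetric under all permutations of the bits of one string. You acknowledge that the naive version gives only $\tilde{\Omega}(1/\epsilon)$, but the fix is not a cleverer univariate profile. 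The paper's construction uses $n$ \emph{independent} Hamming-weight instances $x^{(1)},\dots,x^{(n)}\in\{0,1\}^k$ with $k=\Theta(1/\epsilon)$, setting $p_i=|x^{(i)}|/R$; it then invokes the $\Omega(\sqrt{n})$ constant-error entropy bound as a black box on the outer problem, the $\Omega(k)$ Hamming-weight bound on each inner instance, and a composition theorem (multiplicativity of quantum query complexity under composition) to get $\Omega(\sqrt{n}\,k)$. Finally, a rescaling trick --- appending a dummy outcome of mass $1-t/k$ to form $q$ with $H(q)=\frac{t}{k}H(p)+B(\frac{t}{k})$ --- converts the constant-error requirement on $H(p)$ into an $\epsilon$-error requirement on $H(q)$, for which the bit oracle simulates a probability oracle in $\mathcal{O}(1)$ queries. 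The composition step and the dummy-outcome rescaling are the two ideas missing from your plan, and without them the symmetrized-polynomial route does not reach $\sqrt{n}/\epsilon$.
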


Table~\ref{tab:summary} summarizes our results and compares them with previous work. Our study is the first to establish the tight query complexity for estimating the Shannon entropy in a quantum probability oracle. By leveraging the power of quantum singular value separation, we propose an improved upper bound. Furthermore, we prove a tight lower bound, up to logarithmic factors, by employing quantum oracle composition.

The remainder of the paper is organized as follows. In Section~\ref{sec: prelim} we review the quantum amplitude amplification and estimation method. And also review the quantum singular value transformation and it's application to the folklore Shannon entropy estimation algorithm and quantum singular value separation algorithm. Section~\ref{sec: upper} presents the quantum singular value separation algorithm, which forms a key component of our main algorithm, and provides a detailed complexity analysis. Section~\ref{sec: lower} establishes the lower bound for Shannon entropy estimation, which matches our algorithm's complexity. Section~\ref{sec: discussion} discusses future work and open problems.

\section{Preliminaries} \label{sec: prelim}
\subsection{Quantum amplitude amplification and estimation} \label{subsec: qaae} 
Quantum amplitude amplification and estimation generalize Grover’s search algorithm, providing a quadratic speedup over classical methods~\cite{brassard2000quantum}. The version presented in Lemma~\ref{lem: qaa} is also known as fixed-point quantum search with an optimal number of queries~\cite{yoder2014fixed}, which avoids the ``overcooking" problem. 

Suppose we are given a unitary operator $A$ that prepares the initial state $\ket{S}$ = $A\ket{0}^{\otimes n}$. From $\ket{S}$, we would like to extract the target state $\ket{T}$ with success probability $p_L\geq 1-\delta$, where the overlap $\inn{T}{S}=\sqrt{\lambda}e^{i\phi}$ with $\lambda\neq0$, and $\delta\in[0,1]$ is given. We are also provided with an oracle $U$ which flips an ancilla qubit when fed the target state: 
\begin{align}
    U\ket{T}\ket{b}=\ket{T}\ket{b\oplus1}, U\ket{\tilde{T}}\ket{b}=\ket{\tilde{T}}\ket{b} \quad \nonumber \\
    \text{for} \quad \inn{T}{\tilde{T}}=0.
\end{align}

\begin{lemma}[Quantum amplitude amplification]\label{lem: qaa}
    We can construct a unitary $V$ such that
    \begin{align}
        & V\ket{S}\ket{0}=\ket{T'}\ket{0}, \;\text{and}  \\
        & |\inn{T}{T'}|^2 \geq 1-\delta^2
    \end{align}
    using $L$ queries to $U, A, A^\dagger$, and efficiently implementable $n$-qubit gates, where
    \begin{equation}
        L=\mathcal{O}\left(\log\left(\frac{2}{\delta}\right)\frac{1}{\sqrt{\lambda}}\right).
    \end{equation}
\end{lemma}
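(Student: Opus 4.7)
The plan is to prove this as a standard fixed-point amplitude amplification result in the style of Yoder, Low, and Chuang. First, I would reduce the problem to the two-dimensional Grover subspace $\mathrm{span}\{\ket{T},\ket{\tilde{T}}\}$, in which the initial state decomposes as $\ket{S} = \sqrt{\lambda}\,e^{i\phi}\ket{T} + \sqrt{1-\lambda}\,\ket{\tilde{T}}$. Using the oracle $U$ together with a phase-kickback ancilla, one implements the selective phase $S_T(\alpha) = \1 - (1 - e^{i\alpha})\proj{T}{T}$ with a single query to $U$. Similarly, $S_S(\beta) = A\bigl(\1 - (1 - e^{i\beta})\proj{0^n}{0^n}\bigr)A^\dagger$ reflects, with phase $\beta$, about $\ket{S}$ using one query each to $A$ and $A^\dagger$, plus an efficiently implementable conditional phase on the all-zeros string.

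Next, I would define the generalized Grover iterate $G(\alpha,\beta) = -S_S(\beta)\,S_T(\alpha)$ and take the candidate unitary to be the length-$L$ product $V = G(\alpha_L,\beta_L)\cdots G(\alpha_1,\beta_1)$. The structural claim, which I would establish by induction on $L$ using quantum signal processing style identities, is that within the Grover subspace the residual amplitude $\bra{\tilde{T}}V\ket{S}$ has the form $q(\sqrt{1-\lambda})$, where $q$ is an odd complex polynomial of degree at most $2L+1$ whose coefficients are in bijection with the phase schedule and whose modulus is automatically bounded by $1$ on $[0,1]$.

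The technical heart is then to choose phases that make $|q|$ uniformly small on $[0,\sqrt{1-\lambda_{\min}}]$. Adopting the Yoder-Low-Chuang schedule, in which $\alpha_j = -\beta_{L+1-j}$ is given by a closed-form Chebyshev expression with parameter $\gamma^{-1} = T_{1/L}(1/\delta)$, a product-formula identity for the Chebyshev polynomials $T_L$ yields $|q(x)|^2 = \delta^2\,T_L^2\!\bigl(T_{1/L}(1/\delta)\,x\bigr)$, which is at most $\delta^2$ whenever $x \le \gamma$. Hence setting $\lambda_{\min} = 1-\gamma^2$ forces $|\inn{T}{T'}|^2 \ge 1-\delta^2$. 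Finally, inverting the Chebyshev bound via $T_{1/L}(1/\delta) \approx \cosh\!\bigl(\tfrac{1}{L}\,\mathrm{arccosh}(1/\delta)\bigr)$ shows that $1-\gamma^2 \le \lambda$ is satisfied as soon as $L = \Theta\bigl(\log(2/\delta)/\sqrt{\lambda}\bigr)$. Since each iterate costs $\mathcal{O}(1)$ queries to $U,A,A^\dagger$ plus efficiently implementable $n$-qubit gates, the total query bound matches the claim. The main obstacle is the Chebyshev bookkeeping tying the phase schedule to the residual polynomial and verifying the uniform bound on $|q|$; everything else is routine two-dimensional-subspace manipulation.
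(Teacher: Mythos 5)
Your proposal is correct and is essentially the Yoder--Low--Chuang fixed-point-search argument that the paper itself invokes by citation for this lemma (the paper supplies no proof of its own): the generalized Grover iterates, the Chebyshev phase schedule with $\gamma^{-1}=T_{1/L}(1/\delta)$, and the resulting residual $\delta^2 T_L^2\bigl(T_{1/L}(1/\delta)\sqrt{1-\lambda}\bigr)$ are exactly the cited construction. The only nit is that implementing the selective phase $S_T(\alpha)$ for general $\alpha$ from the bit-flip oracle $U$ takes two queries (compute--phase--uncompute) rather than one, but this is a constant factor and does not affect $L=\mathcal{O}\bigl(\log(2/\delta)/\sqrt{\lambda}\bigr)$.
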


Quantum amplitude amplification serves as a key component in our main algorithm. In particular, we can construct a unitary oracle $U$ that amplifies specific states of interest. Another central tool is the quantum amplitude estimation technique, described in Lemma~\ref{lem: qae}.

\bigskip
\begin{lemma}[Quantum amplitude estimation]\label{lem: qae}
    Let $p=\{p_i\}^n_{i=1}$ be an $n$-dimensional probability distribution and let $O_p$ be a quantum probability oracle for $p$. Quantum amplitude estimation outputs estimates $\tilde{p_i}\in[0,1]$ such that
    \begin{equation}
        |\tilde{p_i}-p_i| \leq \frac{2\pi\sqrt{p_i(1-p_i)}}{M} + \frac{\pi^2}{M^2},
    \end{equation}
    with success probability of at least $\frac{8}{\pi^2}$, using $M$ calls to $O_p$ and $O_p^\dagger$.
\end{lemma}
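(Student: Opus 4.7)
The plan is to reduce the problem to quantum phase estimation (QPE) applied to a Grover-style walk operator whose rotation angle encodes $p_i$. Since the index $i$ lives in a computational-basis register, I first construct a reflection $S_i$ that flips the sign of any basis vector whose first register equals $\ket{i}$; this costs no queries to $O_p$. Together with the standard reflection $S_0 := \1 - 2\proj{0}{0}$, I form the amplitude-amplification operator
\begin{equation}
    Q_i := -\, O_p\, S_0\, O_p^{\dagger}\, S_i,
\end{equation}
each application of which uses exactly one query to $O_p$ and one to $O_p^{\dagger}$.

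Next I would show that $Q_i$ preserves the two-dimensional subspace spanned by $\ket{\tn{good}} := \ket{i}\ket{\psi_i}$ and the orthogonal remainder $\ket{\perp} := \frac{1}{\sqrt{1-p_i}}\sum_{j\neq i}\sqrt{p_j}\ket{j}\ket{\psi_j}$ of $\ket{\psi}_p = O_p\ket{0}$, and acts on it as a rotation by angle $2\theta_i$ with $\sin\theta_i = \sqrt{p_i}$. The orthogonality of the $\ket{\psi_j}$ guaranteed by Definition~\ref{def:qoracle} ensures that $S_i$ restricted to this subspace is the diagonal reflection $\tn{diag}(-1,+1)$ in the basis $\{\ket{\tn{good}},\ket{\perp}\}$, while $O_p S_0 O_p^{\dagger} = \1 - 2\ket{\psi}_p\bra{\psi}_p$ is the other reflection; the usual Grover calculation then yields the $2\theta_i$ rotation. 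Writing $\ket{\psi}_p = \sin\theta_i\ket{\tn{good}} + \cos\theta_i\ket{\perp}$, this real vector decomposes into equal-magnitude components on the two eigenvectors of $Q_i$ with eigenvalues $e^{\pm 2i\theta_i}$, which is the ideal input for phase estimation.

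Then I would invoke textbook QPE with $M$ applications of $Q_i$, obtaining an estimate $\tilde\theta_i\in[0,\pi]$ satisfying $|\tilde\theta_i-\theta_i|\leq \pi/M$ with success probability at least $8/\pi^2$, and output $\tilde p_i := \sin^2(\tilde\theta_i)$. To translate the angular error into the stated additive error on $p_i$, I would use the product identity
\begin{equation}
    \sin^2(\tilde\theta_i) - \sin^2(\theta_i) = \sin(\tilde\theta_i + \theta_i)\,\sin(\tilde\theta_i - \theta_i),
\end{equation}
together with the triangle bound $|\sin(\tilde\theta_i+\theta_i)|\leq |\sin 2\theta_i| + |\tilde\theta_i-\theta_i| = 2\sqrt{p_i(1-p_i)} + |\tilde\theta_i-\theta_i|$, from which the claimed inequality $|\tilde p_i - p_i|\leq \frac{2\pi\sqrt{p_i(1-p_i)}}{M} + \frac{\pi^2}{M^2}$ follows after substituting the QPE bound.

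The main obstacle is matching the tight constants in the error bound. A naive derivative-based estimate of $|\sin^2(\tilde\theta_i)-\sin^2(\theta_i)|$ gives the right shape but the wrong prefactor, so the product-to-sum identity above is essential, and it must be combined with the exact (non-asymptotic) QPE tail probability of $8/\pi^2$ rather than an asymptotic one. The construction of $Q_i$, the two-dimensional invariant-subspace analysis, and the query accounting otherwise follow the standard Brassard et al.\ argument and introduce no further difficulty.
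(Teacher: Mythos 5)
Your proposal is correct: it is a faithful reconstruction of the standard Brassard--H{\o}yer--Mosca--Tapp argument (Grover iterate from the two reflections, phase estimation of the $e^{\pm 2i\theta_i}$ eigenvalues, and the $\sin^2\tilde\theta_i-\sin^2\theta_i=\sin(\tilde\theta_i+\theta_i)\sin(\tilde\theta_i-\theta_i)$ conversion giving the exact constants). The paper gives no proof of this lemma at all --- it imports it directly by citation to Brassard et al. --- so your argument matches the source the paper relies on and fills in precisely the steps the paper omits.
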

In particular, without prior knowledge of $p_i$, we can estimate $p_i$ within additive error $\epsilon$ using $\mathcal{O}(\frac{1}{\epsilon})$ queries to $O_p$ and $O_p^\dagger$. This provides a quadratic speedup over the classical sample complexity $\mathcal{O}(\frac{1}{\epsilon^2})$.

\subsection{Quantum singular value transformation (QSVT)} \label{subsec: qsvt}
Singular value transformation is one of the most powerful tools in quantum algorithms, enabling the application of a function $f$ to the eigenvalues (or singular values) of Hermitian operators. Formally it is defined as follows.
\begin{definition}[Singular value transformation]\label{def:PolySVTrans}
    Let $f:\mathbb{R}\rightarrow\mathbb{C}$ be an even or odd function. Let $A\in\mathbb{C}^{\tilde{d}\times d}$ have the following singular value decomposition
    \begin{equation*}                     
            A=\sum_{i=1}^{d_{\min}}\varsigma_i\ket{\tilde{\psi}_i}\bra{\psi_i},
    \end{equation*}
    where $d_{\min}:=\min(d,\tilde{d})$. For the function $f$       we define the singular value transformation on $A$ as
    \begin{equation*}
            f^{(SV)}(A):=\left\{\begin{array}{rcl} \sum_{i=1}^{d_{\min}}f(\varsigma_i)\ket{\tilde{\psi}_i}\bra{\psi_i}& &\text{if }f\text{ is odd, and}\\[\medskipamount]
            \sum_{i=1}^{d}f(\varsigma_i)\ket{\psi_i}\bra{\psi_i}& & \text{if }f\text{ is even,} \end{array}\right.
    \end{equation*}	
where for $i\in[d]\setminus[d_{\min}]$ we define $\varsigma_i:=0$.
\end{definition}
    
Quantum singular value transformation (QSVT) is the quantum analogue of the classical singular value transformation and has proven to be a powerful tool for property testing and related algorithmic tasks~\cite{gilyen2019quantum}. In particular, QSVT with real polynomial transformations can be implemented on a quantum computer.

\begin{lemma}[Ref \cite{gilyen2019quantum}, Corollary 18]\label{lem: QSVT-gilyen}
Let $\mathcal{H}_U$ be a finite-dimensional Hilbert space and let $U,\Pi, \widetilde{\Pi}\in\text{End}(\mathcal{H}_U)$ be linear operators on $\mathcal{H}_U$ such that $U$ is a unitary, and $\Pi, \widetilde{\Pi}$ are orthogonal projectors. Suppose that $P=\sum_{k=0}^{n}a_k x^k\in\mathbb{R}[x]$ is a degree-$n$ polynomial such that
\begin{itemize}
    \item $a_k\neq 0$ only if $\,k \equiv n \mod 2$, and
    \item for all $x\in[-1,1]\colon$ $| P(x)|\leq 1$.
\end{itemize}
Then we can apply singular value transformation of the matrix $\widetilde{\Pi} U \Pi$ as below
\begin{equation}
	P^{(SV)}\!\left(\widetilde{\Pi}U\Pi\right)\!
\end{equation}
with $n$ uses of $U$, $U^\dagger$ and the same number of controlled reflections $I\!-\!2\Pi$ and $I\!-\!2\widetilde{\Pi}$. 

\end{lemma}

Let us explore how we can apply QSVT to the quantum probability oracle model. Suppose that $S$ is the $k$-degree polynomial that approximates the function $f$, which is the function we want to apply the singular value transformation. Let the projection operators $\widetilde{\Pi},\Pi$ be
\begin{align}
    \widetilde{\Pi}&=\sum_{i=1}^n I\otimes\ket{i}\bra{i}\otimes\ket{i}\bra{i}, \;\text{and} \\
    \Pi&=\ket{0}\bra{0}\otimes\ket{0}\bra{0}\otimes I,
\end{align}
and let $P=S$ and $U=O_p$ in Definition~\ref{def:qoracle}. 

Then we can apply the singular value transformation in Lemma~\ref{lem: QSVT-gilyen}, which satisfies
\begin{equation}
S^{(SV)}\!\left(\widetilde{\Pi}U\Pi\right)\! 
=\sum_{i=1}^nS(\sqrt{p_i})\ket{i}\bra{0}\otimes\ket{\psi_i}\bra{0}\otimes\ket{i}\bra{i}.
\end{equation}

By exploiting QSVT with controlled projection operators, we can implement a unitary operator $U_S^{(SV)}$ that satisfies
\begin{align}\label{eq: QSVT-U-SV}
& (C_{\widetilde{\Pi}\otimes\ket{+}\bra{+}}NOT)(U_S^{(SV)}\otimes I)(\ket{0}\ket{0}(O_p\ket{0})\ket{+}\ket{0}) \\ \nonumber
&=\ket{\psi_{\text{garbage}}}\ket{0}+\sum_{i=1}^n\sqrt{p_i}S(\sqrt{p_i})\ket{i}\ket{\psi_i}\ket{i}\ket{\psi_i}\ket{+}\ket{1},
\end{align}
where $U_S^{(SV)}$ encompasses $k$ applications of $O_p, O_p^\dagger$. The elaborated proof is described in Lemma 2 of~\cite{wang2024quantum}.

The probability of last qubit becoming $\ket{1}$ in Equation~\ref{eq: QSVT-U-SV} is $\sum_{i=1}^np_iS(p_i)^2$, applying quantum amplitude estimation allows us to estimate the quantity within additive error $\epsilon$ using $\mathcal{O}(\frac{1}{\epsilon})$ queries to $U_S^{(SV)}$ and its inverse. Since $U_S^{(SV)}$ encompasses $k(=\text{deg}(S))$ applications of $O_p,O_p^\dagger$, estimation of $\sum_{i=1}^np_iS(p_i)^2$ within additive error $\epsilon$ are obtained by $\mathcal{O}(\frac{k}{\epsilon})$ queries.

To apply singular value transformation to our problem of estimating Shannon entropy, we need low-degree polynomial approximates to the following function $\sqrt{\log(\frac{1}{x})}$.

\begin{lemma}[Ref \cite{wang2024new}, Lemma 3.3]\label{lem:polyapx}
    Let $\beta\in(0,\frac{1}{2}]$, $\eta\in(0,\frac{1}{2}]$ and $t\geq 1$. There exist a polynomial $\tilde{S}$ such that
	\begin{itemize}
            \item $\forall x\in [\beta,1-\beta]\colon |\tilde{S}(x)-\frac{\sqrt{\log(1/x)}}{2\sqrt{\log(1/\beta)}}|\leq\eta$,	and $\,\forall x\in[-1,1]\colon -1\leq\tilde{S}(x)=\tilde{S}(-x)\leq 1$,
	\end{itemize}
	moreover $\deg(\tilde{S})=\mathcal{O}\left({\frac{1}{\beta}\log\left(\frac{1}{\beta\eta}\right)}\right)$.
\end{lemma}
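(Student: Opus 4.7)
My approach is to construct $\tilde S$ by composing a polynomial approximation of $\log(1/x)$ with a polynomial approximation of $\sqrt{\,\cdot\,}$ on an interval bounded away from zero, followed by a symmetrization step to enforce even parity and a mild rescaling to enforce $|\tilde S|\le 1$.

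First, I would invoke the standard QSVT approximation of $\log(1/x)$ on $[\beta,1]$ (e.g.\ Corollary~67 of Gilyén et al., the same polynomial that underpins the folklore Shannon-entropy algorithm mentioned in Section~\ref{sec: intro}): this provides a polynomial $P$ of degree $d_1=\mathcal O\!\left(\tfrac{1}{\beta}\log(1/\eta_1)\right)$ with $\bigl|P(x)-\log(1/x)/(4\log(1/\beta))\bigr|\le\eta_1$ on $[\beta,1-\beta]$ and $|P(x)|\le 1/2$ on $[-1,1]$. Since $\log(1/x)/(4\log(1/\beta))$ is bounded below on $[\beta,1-\beta]$ by $c:=\log(1/(1-\beta))/(4\log(1/\beta))=\Theta(\beta/\log(1/\beta))$, the images $P(x)$ lie, up to error $\eta_1$, inside $[c/2,1/2]$.

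Second, I would take a Chebyshev-based polynomial $T$ approximating $y\mapsto\sqrt y$ on $[c/4,1]$ with additive error $\eta_2$; this requires degree $d_2=\mathcal O(\sqrt{1/c}\log(1/\eta_2))$. Setting $\tilde S_0(x)=\tfrac12 T(P(x))$ and using the $\mathcal O(1/\sqrt c)$ Lipschitz constant of $T$, the triangle inequality yields $\bigl|\tilde S_0(x)-\sqrt{\log(1/x)}/(2\sqrt{\log(1/\beta)})\bigr|\le\mathcal O(\eta_1/\sqrt c+\eta_2)$ on $[\beta,1-\beta]$, which can be driven below $\eta$ by an appropriate choice of $\eta_1$ and $\eta_2$. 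I would then define $\tilde S(x):=\tfrac12(\tilde S_0(x)+\tilde S_0(-x))$ to enforce even parity, which preserves the approximation on $[\beta,1-\beta]$, and multiply $\tilde S$ by a constant slightly smaller than $1$ to guarantee $|\tilde S|\le 1$ throughout $[-1,1]$.

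The principal obstacle is reconciling the degree of this naive composition, $d_1 d_2=\tilde{\mathcal O}(\beta^{-3/2})$, with the tighter claim $\mathcal O(\tfrac{1}{\beta}\log\tfrac{1}{\beta\eta})$. I expect the cleanest remedy is to replace the two-step composition by a direct Chebyshev expansion of $f(x)=\sqrt{\log(1/x)}/(2\sqrt{\log(1/\beta)})$ on $[\beta,1-\beta]$: the function $f$ is analytic on a complex neighborhood whose boundary is dictated by the branch points of $\log$ and $\sqrt{\,\cdot\,}$ at $x=0$ and $x=1$, both at distance $\Theta(\beta)$ from the interval, so that the Chebyshev coefficients decay geometrically at a rate that produces a polynomial of degree $\mathcal O(\tfrac{1}{\beta}\log\tfrac{1}{\beta\eta})$. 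The symmetrization and sub-unit rescaling then proceed as above. The technically delicate step I foresee is the analytic estimate of the Chebyshev-coefficient decay near the two branch points, together with verifying that the truncated expansion stays uniformly bounded by $1$ on $[-1,1]$ without blowing up outside the approximation interval.
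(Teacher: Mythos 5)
There is no in-paper proof to compare against: the paper imports Lemma~\ref{lem:polyapx} verbatim from~\cite{wang2024new} (Lemma~3.3) and never reproves it, so your proposal has to stand on its own. It does not, because the step you defer as ``technically delicate'' --- keeping the truncated expansion bounded by $1$ on all of $[-1,1]$ --- is in fact the crux of every lemma of this type, and your construction fails it. A Chebyshev series adapted to the subinterval $[\beta,1-\beta]$ uses the rescaled polynomials $T_k\bigl(\tfrac{2x-1}{1-2\beta}\bigr)$, and at $x=-1$ the argument is $\approx -3$, so $\bigl|T_k(-3)\bigr|$ grows like $(3+2\sqrt{2})^{\,k}$. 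Meanwhile the Bernstein-ellipse decay of the coefficients coming from branch points at real distance $\Theta(\beta)$ beyond the endpoints is only $\rho^{-k}$ with $\rho-1=\Theta(\sqrt{\beta})$ (not $\Theta(\beta)$, as your degree accounting assumes). Since $\rho \ll 3+2\sqrt{2}$, the partial sums blow up exponentially on $[-1,0]$, and no ``constant slightly smaller than $1$'' rescaling can repair this without destroying the approximation on $[\beta,1-\beta]$. The constraint $|\tilde S|\le 1$ on $[-1,1]$ is precisely what forces the degree up to $\Theta(1/\beta)$ (pure approximation on the interval would need only $\tilde{\mathcal O}(1/\sqrt{\beta})$), and handling it requires machinery absent from your proposal: either the Taylor-expansion-around-$x_0$ construction of Corollary~66 of Gilyén et al.\ (where each monomial $(1-x)^k$ is replaced by a low-degree polynomial that stays bounded on $[-1,1]$, à la Sachdeva--Vishnoi), or multiplication by a polynomial approximation of a smoothed indicator of $[\beta,1-\beta]$ that damps the blow-up region.

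Two secondary flaws. First, the symmetrization $\tilde S(x):=\tfrac12\bigl(\tilde S_0(x)+\tilde S_0(-x)\bigr)$ does \emph{not} preserve the approximation on $[\beta,1-\beta]$: for $x$ in that interval it equals $\tfrac12 f(x)+\tfrac12\tilde S_0(-x)+\mathcal O(\eta)$, and $\tilde S_0(-x)$ is evaluated on $[-(1-\beta),-\beta]$ where you have no approximation guarantee, only a crude bound; this introduces an $\Omega(1)$ additive error. You should instead make the construction even from the start, e.g.\ take $\tilde S_0$ to be a polynomial in $x^2$ (equivalently, approximate $y\mapsto\sqrt{\log(1/\sqrt{y})}/\bigl(2\sqrt{\log(1/\beta)}\bigr)$ on $[\beta^2,(1-\beta)^2]$), or start from an even approximant of $\log(1/|x|)$ in the composition route. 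Second, you should also verify that the approximant respects the prefactor $\tfrac{1}{2\sqrt{\log(1/\beta)}}$ tightly enough that the resulting polynomial never exceeds $1$ on the approximation interval itself (the target reaches $\tfrac12$ at $x=\beta$, so there is headroom, but only if the $[-1,1]$ bound is already $\tfrac12+\mathcal O(\eta)$ rather than merely $\mathcal O(1)$). Your first (composition) route is structurally sound but, as you note, yields degree $\tilde{\mathcal O}(\beta^{-3/2})$, which does not prove the stated bound.
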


By combining the lemmas above, Gilyén and Li~\cite{gilyen2019distributional} obtained the following result. 

\begin{lemma}\label{lem: QSVT-Shannon}
    Let $p=\{p_i\}^n_{i=1}$ be a $n$-dimensional probability distribution and $O_p$ is a quantum probability oracle for $p$. And 
    let $\beta$ be a threshold parameter. Then, $H(p)$ can be estimated within additive error  $(\epsilon+\sum_{p_i<\beta}p_i)$ using
    \begin{equation}
        \tilde{\mathcal{O}}\left(\frac{1}{\epsilon\sqrt{\beta}}\right)
    \end{equation}
    queries to $O_p$ and $O_p^\dagger$.
\end{lemma}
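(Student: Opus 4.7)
The approach is to combine the even polynomial approximation of $\sqrt{\log(1/x)}$ from Lemma~\ref{lem:polyapx} with the QSVT--plus--amplitude-estimation pipeline assembled in Section~\ref{subsec: qsvt}, realizing $H(p)$ as (a rescaled version of) the success amplitude $\sum_i p_iS(\sqrt{p_i})^2$ of the circuit in equation~\ref{eq: QSVT-result} and then estimating that amplitude. First I would instantiate Lemma~\ref{lem:polyapx} with threshold parameter $\sqrt{\beta}$ (so that $p_i\geq\beta$ implies $\sqrt{p_i}\in[\sqrt{\beta},1-\sqrt{\beta}]$) and approximation error $\eta$ to be fixed later, obtaining an even polynomial $\tilde S$ of degree $k=\tilde{\mathcal{O}}(1/\sqrt{\beta})$ with $|\tilde S|\leq 1$ on $[-1,1]$ and
\begin{equation*}
\tilde S(\sqrt{p_i})^2 \;=\; \frac{-\log p_i}{4\log(1/\beta)}\;\pm\; O(\eta) \qquad \text{whenever } p_i\geq\beta.
\end{equation*}

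Next, I would plug $\tilde S$ into the QSVT pipeline of equations~\ref{eq: QSVT-U}--\ref{eq: QSVT-result}, producing a circuit that uses $k$ calls to $O_p,O_p^\dagger$ whose success-flag register is $\ket{1}$ with probability
\begin{equation*}
q \;:=\; \sum_{i=1}^n p_i\,\tilde S(\sqrt{p_i})^2.
\end{equation*}
Running quantum amplitude estimation (Lemma~\ref{lem: qae}) on this flag to additive error $\epsilon'=\Theta(\epsilon/\log(1/\beta))$ costs $\mathcal{O}(1/\epsilon')$ invocations of the circuit, so $\mathcal{O}(k/\epsilon')=\tilde{\mathcal{O}}(1/(\epsilon\sqrt{\beta}))$ queries to $O_p,O_p^\dagger$ overall, and outputting $\tilde H:=4\log(1/\beta)\cdot\tilde q$ incurs at most $\epsilon/2$ error from this step.

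It remains to bound $|\tilde H - H(p)|$ by $\epsilon + \sum_{p_i<\beta}p_i$ up to the logarithmic factors absorbed into $\tilde{\mathcal{O}}$. Splitting the index set into the heavy part $p_i\geq\beta$ and the light part $p_i<\beta$: on the heavy part, Step~1 gives $|4\log(1/\beta)\,p_i\tilde S(\sqrt{p_i})^2-(-p_i\log p_i)|=O(\eta\log(1/\beta))\,p_i$, so the heavy-part approximation error sums to $O(\eta\log(1/\beta))\leq\epsilon/2$ after choosing $\eta=\Theta(\epsilon/\log(1/\beta))$; this choice only enters the degree of $\tilde S$ through a $\log(1/\eta)$ factor that is absorbed into $\tilde{\mathcal{O}}$. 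On the light part, $|\tilde S|\leq 1$ bounds the reconstructed contribution by $4\log(1/\beta)\sum_{p_i<\beta}p_i$, and the true contribution $-\sum_{p_i<\beta}p_i\log p_i$ admits the same bound via $-p\log p \leq p\log(1/p)$, so both collapse into the claimed $\sum_{p_i<\beta}p_i$ error term modulo logarithmic slack. The main delicacy of the argument is precisely this light-part bookkeeping: $\tilde S$ carries no pointwise guarantee on $[0,\sqrt{\beta})$, so one must separately verify that both the genuine and the reconstructed contributions of the light terms are individually controlled by $\sum_{p_i<\beta}p_i$ (with the logarithmic factors folded into $\tilde{\mathcal{O}}$). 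Everything else is a mechanical assembly of Lemmas~\ref{lem: qae}, \ref{lem: QSVT-gilyen}, and \ref{lem:polyapx}.
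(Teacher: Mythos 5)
The paper itself offers no proof of this lemma: it is imported verbatim from \cite{gilyen2019distributional}, and Section~\ref{subsec: qsvt} only lays out the ingredients (the QSVT circuit culminating in equation~\ref{eq: QSVT-result}, the $\mathcal{O}(k/\epsilon)$ amplitude-estimation count, and Lemma~\ref{lem:polyapx}). Your assembly of exactly these ingredients is the intended argument, and your cost accounting is right: degree $k=\tilde{\mathcal{O}}(1/\sqrt{\beta})$, amplitude estimation to precision $\epsilon'=\Theta(\epsilon/\log(1/\beta))$, total $\tilde{\mathcal{O}}(k/\epsilon')=\tilde{\mathcal{O}}(1/(\epsilon\sqrt{\beta}))$, with $\eta$ entering the degree only through a $\log(1/\eta)$ factor.

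Two steps do not hold as written. First, $p_i\geq\beta$ gives $\sqrt{p_i}\geq\sqrt{\beta}$ but certainly not $\sqrt{p_i}\leq 1-\sqrt{\beta}$ (take $p_i$ close to $1$), so Lemma~\ref{lem:polyapx} as you invoke it provides no pointwise guarantee for heavy elements near the top of the range; the standard repair --- which the paper itself uses later in Definition~\ref{def: poly-S} --- is to evaluate the polynomial at $x/2$ and approximate $\sqrt{\log(2/x)}$ on all of $[\sqrt{\beta},1]$, adjusting the reconstruction constant accordingly. Second, your light-part bookkeeping controls the \emph{reconstructed} tail correctly via $|\tilde S|\leq 1$, but not the \emph{true} tail: the inequality you cite, $-p\log p\leq p\log(1/p)$, is an identity, and $\sum_{p_i<\beta}p_i\log(1/p_i)$ is not in general $\tilde{\mathcal{O}}\bigl(\sum_{p_i<\beta}p_i\bigr)$ --- a single atom with $p_i=2^{-2^{100}}$ defeats any polylogarithmic relation between the two quantities. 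The bound that actually holds, and that the downstream application with $\beta=\epsilon/(2n)$ relies on, is the monotonicity estimate $p\log(1/p)\leq\beta\log(1/\beta)$ for $p<\beta\leq 1/e$, giving $\sum_{p_i<\beta}p_i\log(1/p_i)\leq n\beta\log(1/\beta)$. So either state the error as $\epsilon+\tilde{\mathcal{O}}\bigl(n\beta+\sum_{p_i<\beta}p_i\log(1/p_i)\bigr)$, or acknowledge that the lemma's error term $\sum_{p_i<\beta}p_i$ is itself an informal shorthand that your argument cannot literally deliver.
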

By setting $\epsilon=\frac{\epsilon}{2}$, $\beta=\frac{\epsilon}{2n}$ in Lemma~\ref{lem: QSVT-Shannon}, one can estimate $H(p)$ within additive error $\epsilon$ using $\mathcal{O}(\frac{\sqrt{n}}{\epsilon^{1.5}})$ queries to $O_p$ and $O_p^\dagger$, which is the folklore query complexity for Shannon entropy estimation.

We next introduce another useful method, singular value separation, which employs QSVT and serves as a key component of our algorithm. Using QSVT, one can decompose a quantum state into multiple components by separating singular values. 

\begin{lemma}[Ref \cite{wang2024quantum}, Lemma 5]\label{lem: sv-separation}
    Let $U$ be a unitary, and $\widetilde{\Pi},\Pi$ orthogonal projectors with the same rank $d$ acting on $\mathcal{H}_I$. Suppose $A=\widetilde{\Pi}U\Pi$ has a singular value decomposition $A=\sum_{i=1}^d\sigma_i\ket{\tilde{\psi}_i}\bra{\psi_i}_I$. Let $\varphi\in(0,1]$ and $\epsilon>0$. Then there is a unitary $W(\varphi,\epsilon)$ using $\mathcal{O}(\frac{1}{\varphi}\log\frac{1}{\epsilon})$ queries to $U,U^\dagger$ such that
    \begin{equation}
        W(\varphi,\epsilon)\ket{0}_C\ket{0}_P\ket{\psi_i}_I = \beta_0\ket{0}_C\ket{\gamma}_{P,I}+\beta_1\ket{1}_C\ket{+}_P\ket{\psi_i}_I,
    \end{equation}
    where $|\beta_0|^2+|\beta_1|^2=1$, such that
    \begin{itemize}
        \item if $0\leq\sigma_i\leq\varphi$, then $|\beta_1|\leq\epsilon$ and
        \item $2\varphi\leq\sigma_i\leq1$, then $|\beta_0|\leq\epsilon$.
    \end{itemize}
    Here $C$ and $P$ are single-qubit registers, and $I$ is the register on which $A$ acts. 
\end{lemma}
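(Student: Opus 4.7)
The plan is to realize $W(\varphi,\epsilon)$ via QSVT applied to a threshold polynomial that approximates the indicator of $[2\varphi,1]$, followed by a compute--copy--uncompute procedure that restores $\ket{\psi_i}_I$ in the large-singular-value branch.

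First, I would invoke the standard polynomial approximation results from the QSVT literature~\cite{gilyen2019quantum} to produce an odd polynomial $P\in\mathbb{R}[x]$ of degree $d=\mathcal{O}(\frac{1}{\varphi}\log\frac{1}{\epsilon'})$ satisfying $|P(x)|\le\epsilon'$ on $[-\varphi,\varphi]$, $|P(x)|\ge 1-\epsilon'$ on $[2\varphi,1]$, and $|P(x)|\le 1$ on $[-1,1]$, for a suitable $\epsilon'=\Theta(\epsilon^2)$. By Lemma~\ref{lem: QSVT-gilyen}, there is a unitary $V_P$, built from $d$ queries to $U,U^\dagger$ plus reflections about $\Pi$ and $\widetilde{\Pi}$, that block-encodes $P^{(SV)}(A)=\sum_i P(\sigma_i)\ket{\tilde\psi_i}\bra{\psi_i}_I$. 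The ancilla introduced by QSVT plays the role of the register $P$; acting on $\ket{+}_P\ket{\psi_i}_I$, the unitary $V_P$ produces $P(\sigma_i)\ket{+}_P\ket{\tilde\psi_i}_I+\ket{\perp_i}$, where $\ket{\perp_i}$ is orthogonal to $\ket{+}_P\otimes\widetilde{\Pi}\mathcal{H}_I$.

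Next, I would assemble $W(\varphi,\epsilon)$ by a compute--copy--uncompute pattern: (i) prepare $\ket{+}_P$ from $\ket{0}_P$; (ii) apply $V_P$; (iii) CNOT from the detection predicate $\ket{+}\bra{+}_P\otimes\widetilde{\Pi}$ into the flag register $C$; (iv) apply $V_P^\dagger$; (v) undo the Hadamard on $P$. Within each two-dimensional Jordan block $\mathrm{span}\{\ket{\psi_i},\ket{\tilde\psi_i}\}$, the operator $V_P$ acts as a $2\times 2$ rotation parametrized by $P(\sigma_i)$, so step (iv) undoes this rotation and restores the $I$ register to $\ket{\psi_i}_I$ and the $P$ register to $\ket{+}_P$ precisely on the flagged ($C=\ket{1}$) branch. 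The threshold properties of $P$ then yield $|\beta_1|\le|P(\sigma_i)|\le\epsilon$ when $\sigma_i\le\varphi$ and $|\beta_0|\le\sqrt{1-|P(\sigma_i)|^2}\le\epsilon$ when $\sigma_i\ge 2\varphi$, after the choice $\epsilon'=\Theta(\epsilon^2)$.

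The main obstacle is verifying that steps (iii)--(iv) decouple $C$ cleanly from the $I$ and $P$ registers in the flagged branch, despite $V_P$ generally entangling all three. This is resolved by the Jordan-subspace structure: within each $2\times 2$ invariant block the whole procedure reduces to the standard reversible-computing identity ``compute, copy the predicate, uncompute'', which is exact, so all residual entanglement with $C$ is absorbed into the $\beta_0\ket{0}_C\ket{\gamma}_{P,I}$ term. The query cost is $2d=\mathcal{O}(\frac{1}{\varphi}\log\frac{1}{\epsilon})$, matching the stated bound.
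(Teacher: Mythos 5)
The paper does not prove this lemma itself --- it imports it verbatim from [wang2024quantum, Lemma 5] --- so I am judging your construction on its own merits against the standard QSVT realization. Your overall strategy (a degree-$\mathcal{O}(\frac{1}{\varphi}\log\frac{1}{\epsilon})$ threshold polynomial applied via QSVT, flag written by a controlled reflection about the block-encoding's ``good'' subspace) is the right one, and the degree/query accounting is correct. The gap is in steps (iii)--(iv): the compute--copy--uncompute pattern does \emph{not} cleanly decouple $C$, because the ``predicate'' being copied is not deterministic. Work inside the $2\times 2$ Jordan block: writing $c=P(\sigma_i)$, you have $V_P\ket{+}_P\ket{\psi_i}_I=c\ket{g}+s\ket{b}$ with $\ket{g}=\ket{+}_P\ket{\tilde\psi_i}_I$, $Q\ket{b}=0$, $|c|^2+|s|^2=1$. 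After the CNOT the state is $c\ket{1}_C\ket{g}+s\ket{0}_C\ket{b}$, and applying $V_P^\dagger$ to the flagged branch gives
\begin{equation*}
c\,V_P^\dagger\ket{g}=|c|^2\ket{+}_P\ket{\psi_i}_I+c\,\bar{s}'\ket{b'},\qquad |s'|^2=1-|c|^2,
\end{equation*}
because $V_P^\dagger$ only inverts $V_P$ when applied to the \emph{full} superposition $c\ket{g}+s\ket{b}$, which the CNOT has already entangled with $C$. The flagged branch therefore carries a garbage component of norm $|c|\sqrt{1-|c|^2}$ that is not proportional to $\ket{+}_P\ket{\psi_i}_I$; in the intermediate regime $\sigma_i\in(\varphi,2\varphi)$, where the lemma still demands the exact structural form $\beta_0\ket{0}_C\ket{\gamma}+\beta_1\ket{1}_C\ket{+}_P\ket{\psi_i}_I$ (only the \emph{values} of $\beta_0,\beta_1$ are unconstrained there), this garbage can have norm as large as $1/2$. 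The reversible-computing identity you invoke is exact only for classical predicates ($|c|\in\{0,1\}$); it is precisely the coherence destroyed by the mid-circuit copy that breaks it here.

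The fix is to avoid uncomputation altogether by taking $P$ \emph{even} rather than odd. For even $P$, Definition~\ref{def:PolySVTrans} and Lemma~\ref{lem: QSVT-gilyen} give $P^{(SV)}(A)=\sum_i P(\sigma_i)\ket{\psi_i}\bra{\psi_i}=(\bra{+}\otimes\Pi)U_\Phi'(\ket{+}\otimes\Pi)$, i.e.\ the block-encoding has the \emph{same} right singular vectors on both sides. Then a single application of the QSVT unitary followed by one CNOT controlled on $\ket{+}\bra{+}_P\otimes\Pi$ yields exactly
\begin{equation*}
P(\sigma_i)\ket{1}_C\ket{+}_P\ket{\psi_i}_I+\ket{0}_C\ket{\perp_i},
\end{equation*}
with no residual term, giving $\beta_1=P(\sigma_i)$ and $|\beta_0|=\sqrt{1-|P(\sigma_i)|^2}$; the threshold properties and the $\mathcal{O}(\frac{1}{\varphi}\log\frac{1}{\epsilon})$ cost then follow exactly as you argue (an even rectangle-type approximation of the same degree exists). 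Everything else in your write-up --- the polynomial parameters, the choice $\epsilon'=\Theta(\epsilon^2)$, and the identification of the QSVT ancilla with register $P$ --- is fine.
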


Upon applying $W(\varphi, \epsilon)$ to $\ket{0}_C\ket{0}_P\ket{\psi_i}_I$, the single-qubit register $C$ contains probabilistic information indicating whether the eigenvalue $\sigma_i$ lies below or above the threshold $\varphi$. In the next section, we explore the efficient application of Lemma~\ref{lem: sv-separation} to the quantum probability oracle setting. 

\section{Upper bound}\label{sec: upper}

In this section, we present our main algorithm (Algorithm~\ref{alg:Shannon}) and analyze its complexity, thereby obtaining the upper bound for Shannon entropy estimation. As a key ingredient, we employ quantum singular value separation, first introduced in~\cite{wang2024quantum}. We describe the details in the following subsections.

\subsection{Quantum singular value separation algorithm}

In this subsection, we elaborate on the Quantum singular value separation algorithm (depicted as Figure~\ref{fig:qsvs_full}) and explain why it separates the singular values. Singular value separation partitions the singular values and their corresponding singular vectors into a quantum state that we can be accessed and manipulated. We divide the singular values into intervals 
\begin{equation}
    [\varphi_{m},\varphi_{m-1}),[\varphi_{m-1},\varphi_{m-2}),\dots,[\varphi_2,\varphi_1),[\varphi_1,\varphi_0],
\end{equation}
where $m=\mathcal{\tilde{O}}(\log\frac{\epsilon}{n})$ and $\varphi_j = \frac{1}{2^j}$. 

When utilizing QSVT with $O_p$, the singular values are $\sqrt{p_1},\sqrt{p_2},\dots,\sqrt{p_n}$. So our aim is dividing that singular values into the interval ``buckets". \\

\paragraph*{\textbf{Ancila Registers}.}

To apply the singular value separation to $\ket{\psi}_{AB}=O_p\ket{0}_{AB}=\sum_{i=1}^n\sqrt{p_i}\ket{i}_A\ket{\psi_i}_B$, we prepare the registers 

\begin{align}
    C&=(C_1,C_2,\dots,C_m), P=(P_1,P_2,\dots,P_m),\quad\text{and} \nonumber\\
    I&=(I_1,I_2,\dots,I_m),
\end{align}
where $C_i$ and $P_i$ are single-qubit registers and each $I_i$ consists of $\lceil3\log n\rceil$ qubits for $i=1,2,\dots,m$.

To facilitate the description of our multi-register quantum algorithm, we define the following register structure and state configurations: \\

1. Register Partitioning \\
We partition the auxiliary space into three primary groups $C, P$, and $I$. For any range $1 \le x < y \le m$, we denote the sub-registers as:
\begin{align*}
    C_{x,y} &= (C_x, C_{x+1}, \dots, C_y), \\
    P_{x,y} &= (P_x, P_{x+1}, \dots, P_y), \\
    I_{x,y} &= (I_x, I_{x+1}, \dots, I_y),
\end{align*}
where $\dim(C_i) = \dim(P_i) = 2$ and $\dim(I_i) = 2^{\lceil 3\log n \rceil}$. \\

2. Target State Definitions \\
The indicator state $\ket{\lambda_j}_C$ represents a configuration where only the $j$-th control qubit is active:
\begin{equation*}
    \ket{\lambda_j}_C = \underbrace{\ket{0 \dots 0}_{C_{1,j-1}}}_{\text{all zero}} \otimes \underbrace{\ket{1}_{C_j}}_{\text{j-th}} \otimes \underbrace{\ket{0 \dots 0}_{C_{j+1,m}}}_{\text{rest}}
\end{equation*}
For the null case, we define the all-zero state as
\begin{equation*}
    \ket{\lambda_0}_C = \ket{0}_{C_1} \ket{0}_{C_2} \dots \ket{0}_{C_m} = \ket{0}_C
\end{equation*}
valid for $j \in \{1, 2, \dots, m\}$. \\

\noindent These notations allow for a concise representation of the coherent singular value separation across $m$ parallel branches. \\

\paragraph*{\textbf{Definition of useful unitaries}}
We utilize the unitary $W$ (defined in Lemma~\ref{lem: sv-separation}) on $C_j, P_j, I_j$ registers (for each $j$) with specified angles $\varphi_j$. 

\begin{protocolbox}{Unitary: $W$ Gate}
    \centering
    \begin{quantikz}[row sep={0.7cm, between origins}, column sep=0.55cm]
        \lstick{$\ket{0}_{C_j}$} & \gate[3]{W(\varphi_j,\epsilon)} & \meter{} \setwiretype{c} \ar[r] & \small{\begin{cases} \ket{0} \; (\sqrt{p_i} \le \varphi_j) \\ \ket{1} \;(\sqrt{p_i} \ge 2\varphi_j) \end{cases}} \\
        \lstick{$\ket{0}_{P_j}$} & \qw & \qw \\
        \lstick{$\ket{0,0,i}_{I_j}$} & \qw & \qw
    \end{quantikz}

    \vspace{4mm}
    \begin{minipage}{0.95\linewidth}
        \small
        \textbf{Action of $W(\varphi_j, \epsilon)$:}
        \begin{equation*}
            \ket{0,0,0,i}_{C,P,I} \xrightarrow{W} \beta_j' \ket{0}_{C_j} \ket{\gamma} + \beta_j \ket{1}_{C_j} \ket{+,0,0,i}
        \end{equation*}
        
        \textbf{Threshold Conditions:}
        \begin{itemize}
            \item \textbf{Small SV} ($0 \le \sqrt{p_i} \le \varphi_j$): $\beta_j^2 \le \epsilon^2$ (Close to $\ket{0}$)
            \item \textbf{Large SV} ($\sqrt{p_i} \ge 2\varphi_j$): $\beta_j^2 \ge 1-\epsilon^2$ (Close to $\ket{1}$)
        \end{itemize}
        \textit{Note: $\beta_j, \beta_j'$ are derived from the polynomial approximation of the step function.}
    \end{minipage}
\end{protocolbox}

\begin{lemma} \label{lem:W_operator_formal}
The operator $W(\varphi_j, \epsilon)$ satisfies the threshold conditions defined in the Box above. Detailed success probabilities and error bounds are provided in Appendix~\ref{appendix: W-operator}.
\end{lemma}

As our algorithm scales with multiple registers, we consolidate the definition of the state-copying operation $T_j$ as below:

\begin{protocolbox}{Unitary: $T_j$ Gate}
    \centering
    \begin{quantikz}[row sep={0.6cm,between origins}, column sep=0.8cm]
        \lstick{$\ket{0,0,0}_{I_j}$} & \gate[2]{T_j} & \rstick{$\ket{0,0,i}_{I_j}$} \qw \\
        \lstick{$\ket{i}_{A}$} & & \rstick{$\ket{i}_{A}$} \qw
    \end{quantikz}
    
    \vspace{3mm}
    \begin{minipage}{0.95\linewidth}
        \textbf{Definition.} The $T_j$ gate is a unitary operator mapping indices from the ancilla to the index register:
        \begin{equation*}
            T_j \left( \ket{0,0,0}_{I_j} \ket{i}_A \right) = \ket{0,0,i}_{I_j} \ket{i}_A
        \end{equation*}
        where $j \in \{1, \dots, m\}$ denotes the $j$-th copy register.
    \end{minipage}
\end{protocolbox}

\noindent \textit{Note: The $T_j$ gate ensures that each index $i$ is consistently distributed across parallel registers for the subsequent $W$ gate operations.}

We now define the controlled version of the operator $W$ and $T_j$, which are the key components in our quantum singular value separation algorithm.

\begin{protocolbox}{Unitaries: Controlled $T_j$ and $W_j$}
    \small
    \textbf{1. Logical Definitions} \\
    The multi-controlled operators $CT_j$ and $CW_j$ are active only when all preceding control qubits $C_{1,j-1}$ are in the $\ket{0}$ state:
    \begin{align*}
        CT_j &= \ket{0}\bra{0}^{\otimes (j-1)}_{C_{1,j-1}} \otimes T_j + \left( I - \ket{0}\bra{0}^{\otimes (j-1)}_{C_{1,j-1}} \right) \otimes I \\
        CW_j &= \ket{0}\bra{0}^{\otimes (j-1)}_{C_{1,j-1}} \otimes W_j + \left( I - \ket{0}\bra{0}^{\otimes (j-1)}_{C_{1,j-1}} \right) \otimes I
    \end{align*}

    \vspace{2mm}
    \textbf{2. Circuit Decomposition} \\
    The following circuits implement these operations using multi-controlled structures:

    \centering
    \begin{tikzpicture}
        \node (ct) at (0,0) {
            \begin{quantikz}[row sep={0.55cm,between origins}, column sep=0.6cm]
                \lstick{$C$} & \ctrl{1} & \qw \\
                \lstick{$I_j, A$} & \gate{T_j} & \qw 
            \end{quantikz}
        };
        \node at (1.8, -0.3) {$\equiv$};
        \node (ct_decomp) at (4.2, 0) {
            \begin{quantikz}[row sep={0.45cm,between origins}, column sep=0.5cm]
                \lstick{$C_1$} & \ctrl{4} & \qw \\
                \lstick{$C_2$} & \ctrl{3} & \qw \\[-0.2cm]
                \lstick{$\vdots$} & \vdots & \qw \\[-0.2cm]
                \lstick{$C_{j-1}$} & \ctrl{1} & \qw \\
                \lstick{$I_j, A$} & \gate{T_j} & \qw 
            \end{quantikz}
        };
        \node[below=1.2cm, font=\bfseries\sffamily\footnotesize] at (2.1,0) {(a) $CT_j$ Decomposition};
    \end{tikzpicture}

    \vspace{4mm}

    \begin{tikzpicture}
        \node (cw) at (0,0) {
            \begin{quantikz}[row sep={0.55cm,between origins}, column sep=0.6cm]
                \lstick{$C$} & \ctrl{1} & \qw \\
                \lstick{$P_j, I_j$} & \gate{W_j} & \qw 
            \end{quantikz}
        };
        \node at (1.8, -0.3) {$\equiv$};
        \node (cw_decomp) at (4.5, 0) {
            \begin{quantikz}[row sep={0.45cm,between origins}, column sep=0.5cm]
                \lstick{$C_1$} & \ctrl{4} & \qw \\
                \lstick{$C_2$} & \ctrl{3} & \qw \\[-0.2cm]
                \lstick{$\vdots$} & \vdots & \qw \\[-0.2cm]
                \lstick{$C_{j-1}$} & \ctrl{1} & \qw \\
                \lstick{$C_j, P_j, I_j$} & \gate{W_j} & \qw 
            \end{quantikz}
        };
        \node[below=1.2cm, font=\bfseries\sffamily\footnotesize] at (2.1,0) {(b) $CW_j$ Decomposition};
    \end{tikzpicture}

    We will use the simplified version (the left) of the circuit.
\end{protocolbox} 

\paragraph*{\textbf{Quantum singular value separation algorithm.}} Using these unitaries as tools, we state the singular value separation algorithm and analyze its behavior in Theorem~\ref{thm:sv-separation}.

\begin{figure*}[t]
    \centering
    \begin{tikzpicture}
    \node (left) at (0,0) {
        \begin{quantikz}[row sep={0.55cm,between origins}, column sep=0.8cm]
            \lstick{$C$} & \gate[5]{\text{QSVS}_k} & \qw \\
            \lstick{$P$} & \qw & \qw \\
            \lstick{$I$} & \qw & \qw \\
            \lstick{$A$} & \qw & \qw \\
            \lstick{$B$} & \qw & \qw
        \end{quantikz}
    };

    \node at (2.0, 0) {\large $\equiv$};

    \node (main) at (9.0, 0) {
        \begin{quantikz}[row sep={0.55cm,between origins}, column sep=0.55cm]
            \lstick{$C$} & \ctrl{2} & \ctrl{1} & \ctrl{2} & \ctrl{1} & \qw & \dots & \ctrl{2} & \ctrl{1} & \qw \\
            \lstick{$P$} & \qw & \gate[2]{W_1} & \qw & \gate[2]{W_2} & \qw & \dots & \qw & \gate[2]{W_k} & \qw \\
            \lstick{$I$} & \gate[2]{T_1} & \qw & \gate[2]{T_2} & \qw & \qw & \dots & \gate[2]{T_k} & \qw & \qw \\
            \lstick{$A$} & \qw & \qw & \qw & \qw & \qw & \dots & \qw & \qw & \qw \\
            \lstick{$B$} & \qw & \qw & \qw & \qw & \qw & \dots & \qw & \qw & \qw
        \end{quantikz}
    };

    \end{tikzpicture}
    
    \vspace{4pt}
    \caption{\textbf{Circuit architecture for the iterative QSVS algorithm.} 
    The abstract $\text{QSVS}_k$ operation is decomposed into a sequence of $k$ cycles. 
    Each cycle consists of a state copying gate $T_j$ followed by a singular value separation gate $W_j$, both conditioned on the control register $C$. 
    Horizontal wires represent the coherent evolution across the control ($C$), phase ($P$), index ($I$), and system ($A, B$) registers.}
    \label{fig:qsvs_full}
\end{figure*}

The Quantum singular value separation algorithm utilizes the unitary defined as
\begin{equation}\label{eq: Uk}
    \text{QSVS}_k(\epsilon) = \prod_{j=1}^k C_jW(\varphi_j,\epsilon)C_jT.
\end{equation}
We demonstrate the power of $\text{QSVS}_k$ by applying it to $\ket{\psi}_{AB}$. Lemma~\ref{lem: mag}, which is deduced from Theorem~\ref{thm:sv-separation}, demonstrates that $\text{QSVS}_k$ effectively separates singular values. 

\begin{theorem}[Singular value separation algorithm]\label{thm:sv-separation}
    Let $\ket{\psi}_{AB}=O_p\ket{0}_{AB}$. Also, define $\ket{\Psi_k}$ as
    \begin{equation}\label{eq: sv}
        \ket{\Psi_k}=\text{QSVS}_k(\epsilon)(\ket{0}_C\ket{0}_P\ket{0}_I\ket{\psi}_{AB})
    \end{equation}
    Then,
    \begin{equation}
        |(\bra{i}_A\bra{\lambda_j}_C\otimes I)\ket{\Psi_k}|^2=p_iB_j(\sqrt{p_i})^2,
    \end{equation}
    for $j=1,2,\dots,k$ and
    \begin{equation}
        |(\bra{i}_A\bra{0}_C\otimes I)\ket{\Psi_k}|^2=p_i(B^{'}_k(\sqrt{p_i}))^2,
    \end{equation}
    where
    \begin{align}
        B_j^{'}(x) &= \prod_{i=1}^{j}\beta_i^{'}(x) \\
        B_j(x) &= B_{j-1}^{'}(x)\beta_j(x),
    \end{align}
    and $\beta_j,\beta_j'$ are derived from the polynomial approximation of the step function of the $W$ gate.
\end{theorem}
\begin{proof}
    See Appendix~\ref{appendix: sv-separation}.
\end{proof}

From Theorem~\ref{thm:sv-separation}, the output of applying $\text{QSVS}_k$ to $\ket{\psi}_{AB}$ distributes the information of $\ket{\psi}_{AB}$ into the register $C$. For each $i$, $\sqrt{p_i}\ket{i}$ are distributed into $C_j$ by proportion $B_j(\sqrt{p_i})^2$.

\begin{lemma}\label{lem: mag}
    Suppose $x\in[\varphi_j,\varphi_{j-1})$. Then,
    \begin{equation}
        B_j(x)^2+B_{j+1}(x)^2\geq 1-\mathcal{O}(j\epsilon^2).
    \end{equation}
\end{lemma}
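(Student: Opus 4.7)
The plan is to exploit the geometric structure $\varphi_j = 2^{-j}$ so that for $x \in [\varphi_j, \varphi_{j-1}) = [\varphi_j, 2\varphi_j)$, all of the separation operators except the one at index $j$ act in a regime where Lemma~\ref{lem: W-operator} provides tight control. First I would classify the indices. For every $i \leq j-1$ the threshold satisfies $\varphi_i = 2^{j-i}\varphi_j \geq 2\varphi_j > x$, so $x \leq \varphi_i$ and hence Lemma~\ref{lem: W-operator} gives $\beta_i(x)^2 \leq \epsilon^2$, equivalently $\beta_i'(x)^2 \geq 1-\epsilon^2$. For $i = j+1$ the threshold satisfies $2\varphi_{j+1} = \varphi_j \leq x$, so $\beta_{j+1}(x)^2 \geq 1-\epsilon^2$. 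Only the single index $i=j$ falls in a ``gray zone'' $[\varphi_j, 2\varphi_j)$ in which the lemma gives no direct bound beyond the normalization $\beta_j(x)^2 + \beta_j'(x)^2 = 1$.

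Next I would rewrite the quantity of interest so that this uncontrolled quantity is multiplied against a factor forced to be near $1$. Using $B_j(x) = B_{j-1}'(x)\beta_j(x)$ and $B_{j+1}(x) = B_j'(x)\beta_{j+1}(x) = B_{j-1}'(x)\beta_j'(x)\beta_{j+1}(x)$, I factor
\begin{equation}
B_j(x)^2 + B_{j+1}(x)^2 = B_{j-1}'(x)^2 \bigl(\beta_j(x)^2 + \beta_j'(x)^2 \beta_{j+1}(x)^2\bigr).
\end{equation}
Substituting $\beta_{j+1}(x)^2 \geq 1-\epsilon^2$ and then using $\beta_j(x)^2 + \beta_j'(x)^2 = 1$ collapses the inner factor via
\begin{equation}
\beta_j(x)^2 + \beta_j'(x)^2 \beta_{j+1}(x)^2 \geq \beta_j(x)^2 + \beta_j'(x)^2(1-\epsilon^2) = 1 - \epsilon^2 \beta_j'(x)^2 \geq 1 - \epsilon^2.
\end{equation}
For the outer prefactor, the telescoped bound $B_{j-1}'(x)^2 = \prod_{i=1}^{j-1} \beta_i'(x)^2 \geq (1-\epsilon^2)^{j-1}$ combined with Bernoulli's inequality yields
\begin{equation}
B_j(x)^2 + B_{j+1}(x)^2 \geq (1-\epsilon^2)^j \geq 1 - j\epsilon^2,
\end{equation}
which is the claimed $1 - \mathcal{O}(j\epsilon^2)$ lower bound.

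The main conceptual step, and hence the only real obstacle, is the choice of pairing: the lemma is stated for $B_j^2 + B_{j+1}^2$ rather than $B_j^2$ alone precisely because the gray-zone quantity $\beta_j'(x)^2$ is uncontrolled on its own, but pairing with level $j+1$ exposes it multiplied by $\beta_{j+1}(x)^2 \approx 1$, so that the unknown piece cancels up to $\epsilon^2$. Once this pairing is noticed, the remainder is routine telescoping over $i \leq j-1$ of near-projection approximations, each losing at most $\epsilon^2$, for a total additive error of $\mathcal{O}(j\epsilon^2)$.
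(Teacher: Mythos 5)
Your proposal is correct and follows essentially the same route as the paper: classify the indices $i<j$, $i=j$, $i=j+1$ using Lemma~\ref{lem: W-operator}, factor out $B_{j-1}'(x)^2$, collapse the gray-zone index $j$ via the normalization $\beta_j(x)^2+\beta_j'(x)^2=1$ together with $\beta_{j+1}(x)^2\geq 1-\epsilon^2$, and telescope the remaining $j-1$ factors. Your write-up is in fact somewhat tighter than the paper's (explicit inequalities and Bernoulli in place of loose $\mathcal{O}(\epsilon^2)$ bookkeeping), but the underlying argument is the same.
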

\begin{proof}
    See Appendix~\ref{appendix: mag}.
\end{proof}
Note that $B_k'(\sqrt{p_i})^2+\sum_{j=1}^k B_j(\sqrt{p_i})^2 = 1$. $B_j(\sqrt{p_j})^2, B_{j+1}(\sqrt{p_j})^2$ occupies most of the portion. Thus, Lemma~\ref{lem: mag} implies that most of the information corresponding to a singular value $\sqrt{p_i}\in[\phi_j,\phi_{j-1})$ is concentrated in registers $C_j$ and $C_{j+1}$. 

Since the information of the singular value $p_i$ is concentrated in few registers ($C_j, C_{j+1}$), we can use quantum amplitude amplification to amplify the information in those registers and apply the quantum singular value transformation. This allows us to achieve an an efficient (Heisenberg limit) query complexity compared to previous algorithms~\cite{li2018quantum, gilyen2019distributional}.

The following lemma states the complexity of running the quantum singular value separation algorithm, which is equivalent to the query complexity of $\text{QSVS}_k(\epsilon)$.

\begin{lemma}\label{lem: Uke}
    We can query to $\text{QSVS}_k(\epsilon)$ by using
    \begin{equation}
        \mathcal{O}\left(2^k\log\frac{1}{\epsilon}\right) 
    \end{equation}
    queries to $O_p$ and $O_p^\dagger$. 
\end{lemma}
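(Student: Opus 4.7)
The plan is to bound the number of queries contributed by each factor in the product $U_k(\epsilon)=\prod_{j=1}^{k} C_jW(\varphi_j,\epsilon)\,C_jT$ separately and then sum. There are two kinds of factors to account for: the $C_jT$'s, which act only on the index registers, and the $C_jW(\varphi_j,\epsilon)$'s, which are the only factors that actually invoke $O_p$ or $O_p^{\dagger}$.

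First I would observe that $C_jT$ makes zero queries to $O_p, O_p^{\dagger}$. By Definition~\ref{def:CT}, $T_j$ acts as $(\ket{0}\ket{0}\ket{0})_{I_j}\ket{i}_A\mapsto(\ket{0}\ket{0}\ket{i})_{I_j}\ket{i}_A$, which is just a CNOT-type copy from register $A$ into the $I_j$ subregister, and its controlled version only adds a classical control on $C_{1,j-1}$. So $C_jT$ uses no calls to $O_p$.

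Next I would count the cost of each $C_jW(\varphi_j,\epsilon)$. Applying Lemma~\ref{lem: sv-separation} with the unitary $U=O_p$ (together with the projectors needed for the embedding used in Lemma~\ref{lem: W-operator}) and with threshold $\varphi_j=2^{-j}$, we get
\begin{equation}
\#\text{queries}\bigl(W(\varphi_j,\epsilon)\bigr)=\mathcal{O}\!\left(\tfrac{1}{\varphi_j}\log\tfrac{1}{\epsilon}\right)=\mathcal{O}\!\left(2^{j}\log\tfrac{1}{\epsilon}\right).
\end{equation}
The controlled version $C_jW(\varphi_j,\epsilon)$ is obtained by replacing each query to $O_p$ (resp.\ $O_p^{\dagger}$) inside $W(\varphi_j,\epsilon)$ with a query controlled on $C_{1,j-1}=\ket{0}\cdots\ket{0}$; this substitution preserves the query count, so $C_jW(\varphi_j,\epsilon)$ also uses $\mathcal{O}(2^{j}\log\tfrac{1}{\epsilon})$ queries.

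Finally I would sum the contributions. The total query complexity is
\begin{equation}
\sum_{j=1}^{k}\mathcal{O}\!\left(2^{j}\log\tfrac{1}{\epsilon}\right)=\mathcal{O}\!\left(\log\tfrac{1}{\epsilon}\sum_{j=1}^{k}2^{j}\right)=\mathcal{O}\!\left(2^{k}\log\tfrac{1}{\epsilon}\right),
\end{equation}
where the last equality uses the standard geometric-sum identity $\sum_{j=1}^{k}2^{j}=2^{k+1}-2$. There is no real obstacle here, just a bookkeeping step; the only subtlety is confirming that adding the multi-qubit control on $C_{1,j-1}$ does not change query complexity and that $T_j$ is query-free, both of which are immediate from the definitions. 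This yields the stated bound $\mathcal{O}(2^{k}\log\tfrac{1}{\epsilon})$.
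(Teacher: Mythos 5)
Your proposal is correct and follows essentially the same route as the paper: bound each $C_jW(\varphi_j,\epsilon)$ by $\mathcal{O}(2^j\log\frac{1}{\epsilon})$ queries via Lemma~\ref{lem: sv-separation} and sum the geometric series. The only additions are your explicit remarks that $C_jT$ is query-free and that adding the control does not change the query count, which the paper leaves implicit.
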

\begin{proof}
    See Appendix~\ref{appendix: Uke}.
\end{proof}

\begin{widetext}
\begin{algorithm}[t]
\caption{Quantum algorithm for estimating Shannon entropy $H(p)$}
\label{alg:Shannon}
\KwIn{Quantum probability oracle $O_p$ and its inverse $O_p^\dagger$}
\KwOut{$H(p)$}

Prepare registers $C=(C_1,C_2,\dots,C_m),P=(P_1,P_2,\dots,P_m),I=(I_1,I_2,\dots,I_m)$ and $F$, where $C_i,P_i,F$ are single qubits and each $I_i$ has $\lceil\log n\rceil$ qubits.

Define polynomials $S_k$ for $k=1,2,\dots,m$ \tcp*{Definition~\ref{def: poly-S}}

\For{$k=1,2,\dots,m$}{
    Prepare $\ket{\Psi_k}$ \tcp*{Algorithm~\ref{thm:sv-separation}}
    
    Apply quantum amplitude amplification to obtain $\ket{\phi_k}$ \tcp*{Theorem~\ref{thm: pick}}

    Apply QSVT to obtain $U_{S_k}^{(SV)}$ \tcp*{Lemma~\ref{lem: QSVT-gilyen}}

    Use $(C_{\widetilde{\Pi}\otimes\ket{+}\bra{+}}NOT)(U_{S_k}^{(SV)}\otimes I)$ on $\ket{\phi_k}$ with some auxiliary qubits and apply quantum amplitude estimation to obtain $v_k^{'}=\frac{1}{\text{Sum}(k)}\sum_{i=1}^n p_iS_k(\sqrt{p_i})^2B_k(\sqrt{p_i})^2\log\frac{1}{\varphi_k}$ \tcp*{Lemma~\ref{lem: qaa}}

    Apply quantum amplitude estimation on $\ket{\Psi_k}$ to obtain $\text{Sum}(k)$ \tcp*{Lemma~\ref{lem: qaa}}

    Multiply $v_k^{'}\times\text{Sum}(k)$ to obtain $v_k$ \tcp*{Theorem~\ref{thm: v}}
}

Calculate $v=\sum_{k=1}^m 8v_k-2$ \tcp*{Theorem~\ref{thm: v}}

\Return{$v$}
\end{algorithm}
\end{widetext}


\begin{figure*}[t]
    \centering
    \resizebox{0.95\textwidth}{!}{
        \begin{tikzpicture}[
            node distance=1.1cm and 1.3cm, 
            block/.style={rectangle, draw, fill=blue!5, text width=2.2cm, align=center, minimum height=1.1cm, font=\normalsize\sffamily, rounded corners},
            sum/.style={circle, draw, fill=orange!10, minimum size=0.9cm, inner sep=0pt, font=\normalsize},
            arrow/.style={-{Stealth[scale=1.2]}, thick},
        ]

        \node (input1) {$\ket{\psi}_{AB}$};
        
        \node [block, right=of input1] (qsvs) {QSVS \\ ($QSVS_k(\epsilon)$)};
        \node [block, right=1.0cm of qsvs] (qaa) {QAA};
        \node [block, right=1.0cm of qaa] (qsvt) {QSVT \\ ($U_{S_k}^{(SV)}$)};
        \node [block, right=1.0cm of qsvt] (qae1) {QAE};
        \node [right=0.8cm of qae1] (vk_prime) {$v_k'$};

        \node [block, below=1.8cm of qaa] (qae2) {QAE};
        \node [right=3.5cm of qae2] (sumk) {$\text{Sum}(k)$};

        \node [sum, right=1.5cm of vk_prime] (mult) {$\times$};
        \node [right=0.8cm of mult] (final) {$v_k$};

        \draw [arrow] (input1) -- (qsvs);
        \draw [arrow] (qsvs) -- (qaa);
        \draw [arrow] (qaa) -- (qsvt);
        \draw [arrow] (qsvt) -- (qae1);
        \draw [arrow] (qae1) -- (vk_prime);

        \coordinate (branch) at ($(input1)!0.5!(qsvs)$);
        \draw [thick] (qsvs) |- (qae2);
        \draw [arrow] (qae2) -- (sumk);

        \draw [arrow] (vk_prime) -- (mult);
        \draw [arrow] (sumk.east) -| (mult.south);
        \draw [arrow] (mult) -- (final);

        \end{tikzpicture}
    }
    \caption{\textbf{Schematic workflow of Algorithm~\ref{alg:Shannon} for Shannon entropy estimation.} The upper path computes $v_k'$ by sequentially applying QSVS (Theorem~\ref{thm:sv-separation}), QAA (Lemma~\ref{lem: qaa}), QSVT (Lemma~\ref{lem: QSVT-gilyen}), and QAE (Lemma~\ref{lem: qae}) to the input state $\ket{\psi}_{AB}$. The lower path estimates $\text{Sum}(k)$ through QSVS and QAE. The product of these two outputs yields $v_k$, which is finally used to approximate the Shannon entropy as $H(p) \approx -2 + \sum_{k=1}^m 8v_k$. }
\end{figure*}

\subsection{Main algorithm}

Our main algorithm applies the singular value separation algorithm, followed by quantum amplitude amplification to extract desired states. QSVT and quantum amplitude estimation are then employed to estimate the Shannon entropy efficiently. To efficiently apply QSVT to the result of singular value separation algorithm, we define the following polynomials.

\bigskip
\begin{definition}\label{def: poly-S}
There exists a polynomial $S$ satisfying
    \begin{itemize}
        \item $\forall x\in [\varphi_{k},1]\colon \left|S(x)-\frac{\sqrt{\log(2/x)}}{2\sqrt{\log(1/\varphi_{k+1})}}\right|\leq\eta$ \;and \\ $\,\forall x\in[-1,1]\colon -1\leq\tilde{S}(x)=\tilde{S}(-x)\leq 1$, and
        \item $\deg(S)=\mathcal{O}\left({\frac{1}{\varphi_{k+1}}\log\left(\frac{1}{\eta\varphi_{k+1}}\right)}\right)$
    \end{itemize}
for $k=1,2,\dots,m$ by Lemma~\ref{lem:polyapx}. For each $k$, we denote the polynomial as $S_k$.
\end{definition}

Now we construct an approximate representation of $H(p)$ using $B_k$ and $S_k$.
\begin{theorem}\label{thm: v}
    Suppose $B_k$ is derived from $U_k(\delta)$ and $S_k$ is defined as above. Then
    \begin{align}
        v_k&=\sum_{i=1}^n p_iS_k(\sqrt{p_i})^2B_k(\sqrt{p_i})^2\log\frac{1}{\varphi_{k+1}},\;\text{and} \\
        v&=-2+\sum_{k=1}^m 8v_k
    \end{align}
    satisfies
    \begin{equation}
        |v-H(p)|=\mathcal{\tilde{O}}\left(m\delta^2+\eta+\frac{n}{2^m}\right).
    \end{equation}
\end{theorem}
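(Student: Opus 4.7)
The plan is to swap the order of summation and analyze each index $i$ based on the interval containing $\sqrt{p_i}$. Writing
\begin{equation}
v + 2 = \sum_{i=1}^n p_i F_i, \quad F_i := \sum_{k=1}^m 8\, S_k(\sqrt{p_i})^2 B_k(\sqrt{p_i})^2 \log\tfrac{1}{\varphi_{k+1}},
\end{equation}
I would split the indices into a \emph{bulk} part ($\sqrt{p_i} \geq \varphi_{m-1}$, so the unique $j$ with $\sqrt{p_i} \in [\varphi_j, \varphi_{j-1})$ lies in $\{1,\dots,m-1\}$) and a \emph{tail} part ($\sqrt{p_i} < \varphi_{m-1}$). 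The goal is to show $F_i \approx 2 + \log(1/p_i)$ on the bulk, and then bound the tail via the trivial $F_i \leq \mathcal{O}(m)$ together with a pointwise bound on the omitted entropy mass.

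For a bulk index $i$ with associated $j$, the first step is to localize $B_k(\sqrt{p_i})^2$. From Lemma~\ref{lem: W-operator}, $\beta_k(\sqrt{p_i})^2 \leq \delta^2$ whenever $\sqrt{p_i} \leq \varphi_k$, giving $B_k(\sqrt{p_i})^2 \leq \delta^2$ for $k \leq j-1$; and $\beta_k'(\sqrt{p_i})^2 \leq \delta^2$ whenever $\sqrt{p_i} \geq 2\varphi_k = \varphi_{k-1}$, so telescoping through the $B_{k-1}'$ factor yields $B_k(\sqrt{p_i})^2 \leq \delta^2$ for $k \geq j+2$. Lemma~\ref{lem: mag} then handles the two remaining indices via $B_j(\sqrt{p_i})^2 + B_{j+1}(\sqrt{p_i})^2 = 1 - \mathcal{O}(j\delta^2)$.

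Next I would replace $S_k(\sqrt{p_i})^2$ by its target ratio $\tfrac{\log(2/\sqrt{p_i})}{4\log(1/\varphi_{k+1})}$ for $k \in \{j, j+1\}$. Since $\sqrt{p_i} \geq \varphi_{j+1} \geq \varphi_k$ in that range, Definition~\ref{def: poly-S} applies, and $|S_k^2 - f_k^2| \leq (|S_k|+|f_k|)\eta \leq \tfrac{3}{2}\eta$ combined with $\log(1/\varphi_{k+1}) \leq m+1$ gives $8\, S_k(\sqrt{p_i})^2 \log(1/\varphi_{k+1}) = 2\log(2/\sqrt{p_i}) + \mathcal{O}(\eta m)$. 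Summing over $k$ and inserting the mass estimate produces
\begin{equation}
F_i = 2\log(2/\sqrt{p_i})\bigl(B_j(\sqrt{p_i})^2 + B_{j+1}(\sqrt{p_i})^2\bigr) + \mathcal{O}(\eta m + m^2\delta^2) = 2 + \log(1/p_i) + \mathcal{O}(\eta m + m^2\delta^2),
\end{equation}
using base-$2$ logarithm so $2\log(2/\sqrt{p_i}) = 2 + \log(1/p_i)$.

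Finally, summing $p_i F_i$ over the bulk reconstructs $\sum_{i \in \mathrm{bulk}} p_i(2 + \log(1/p_i))$ up to $\tilde{\mathcal{O}}(\eta + m\delta^2)$, which combined with the $-2$ prefactor matches $H(p)$ modulo the omitted tail $\sum_{i \in \mathrm{tail}} p_i \log(1/p_i)$. I would close this gap with the elementary inequality $-x\log x \leq 2\sqrt{x}$ on $(0,1)$ to obtain $\sum_{\mathrm{tail}} p_i \log(1/p_i) \leq 2\sum_{\mathrm{tail}} \sqrt{p_i} \leq 2n/2^{m-1} = \mathcal{O}(n/2^m)$; the tail's own contribution $\sum_{\mathrm{tail}} p_i F_i$ is controlled trivially by $F_i \leq 8(m+1)\sum_k B_k^2 \leq \mathcal{O}(m)$ and $\sum_{\mathrm{tail}} p_i \leq 4n/4^m$, giving $\mathcal{O}(mn/4^m) = \mathcal{O}(n/2^m)$. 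Combining yields $|v - H(p)| = \tilde{\mathcal{O}}(m\delta^2 + \eta + n/2^m)$. The structural insight is that the $\log(1/\varphi_{k+1})$ weight in $v_k$ was inserted precisely to cancel the denominator $2\sqrt{\log(1/\varphi_{k+1})}$ of the polynomial target, so the two adjacent significant blocks combine cleanly into $\log(1/p_i)$; the main source of tedium is the error bookkeeping across three sources (polynomial approximation, singular-value separation, and probability tail).
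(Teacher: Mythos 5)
Your proof is correct and follows essentially the same route as the paper's: a pointwise analysis of $\sum_k S_k^2 B_k^2\log(1/\varphi_{k+1})$ on each dyadic interval $[\varphi_j,\varphi_{j-1})$, using the concentration of $B_k^2$ on $k\in\{j,j+1\}$ together with the polynomial approximation of Definition~\ref{def: poly-S}, followed by a separate tail bound for small $p_i$. If anything you are slightly more careful than the paper, placing the bulk/tail cutoff at $\varphi_{m-1}$ so that $S_{j+1}$ is always defined, and making the omitted-entropy bound $-x\log x\leq 2\sqrt{x}$ explicit.
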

\begin{proof}
    The proof is elaborated in Appendix~\ref{appendix: v}.
\end{proof}

By choosing parameters $\delta=\sqrt{\frac{\epsilon}{4m}}$, $\eta=\frac{\epsilon}{4}$ and $m=\log\frac{\epsilon}{2n}$, we obtain $|v-H(p)|\leq\mathcal{\tilde{O}}(\epsilon)$. Since $\delta$ and $\eta$ only contributes to the logarithmic terms of the complexity and $m$ is logarithmic to $\epsilon,n$, the complexity of estimating $v$ within additive error $\epsilon$ matches that of estimating $H(p)$, up to logarithmic factors.

Quantum amplitude amplification can be used to extract states associated with specific states. Now we examine how amplitude amplification is applied to extract certain states after the singular value separation algorithm. Applying some fundamental gates and Lemma~\ref{lem: qaa}, we can prove the following theorem.

\begin{theorem} \label{thm: pick}
    We define the quantum state $\ket{\phi_k}$ as
    \begin{align} \label{eq: pick}
    \ket{\phi_k} &= \ket{1}_{C_k}\sum_{i=1}^n\frac{\sqrt{p_i}B_k(\sqrt{p_i})}{\sqrt{\mathrm{Sum}(k)}}\ket{i}_A\ket{\psi_i}_B\ket{\text{garbage}}, \\
    \mathrm{Sum}(k) &= \sum_{i=1}^n p_iB_k(\sqrt{p_i})^2. 
    \end{align}
    There exists a quantum unitary $V$ satisfying
    \begin{align}
    & V \ket{0} = \ket{\phi'}\ket{0}, \\
    & \big|\inn{\phi'}{\phi_k}\big|^2 \geq 1 - \delta^2,
    \end{align}
    where $V$ can be implemented using
    \begin{equation}
        \mathcal{\tilde{O}}\left(\log\left(\frac{1}{\delta}\right)\frac{2^k}{\sqrt{\mathrm{Sum}(k)}}\right)
    \end{equation}
    queries to $O_p$ and $O_p^\dagger$, along with efficiently implementable elementary gates.

\end{theorem}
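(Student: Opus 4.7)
The proof strategy is to take the singular-value-separated state $\ket{\Psi_k}$ from Theorem~\ref{thm:sv-separation} and boost the component labeled by $\ket{\lambda_k}_C$ using the fixed-point quantum amplitude amplification of Lemma~\ref{lem: qaa}.

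First I would observe that the decomposition of $\ket{\Psi_k}$ given by Theorem~\ref{thm:sv-separation} is automatically an orthogonal decomposition across the $C$ register: one summand lies in the subspace $\ket{0}_C$ and each other summand lies in the subspace $\ket{\lambda_j}_C$, and these subspaces are mutually orthogonal. The squared norm of the $j=k$ piece is $\sum_{i=1}^n p_i B_k(\sqrt{p_i})^2=\text{Sum}(k)$, and the normalized version of that piece is exactly the target $\ket{\phi_k}$ (the factors $\ket{0}_{C_j}$ for $j\neq k$ being absorbed into the garbage register of the theorem statement). Hence
\begin{equation}
\ket{\Psi_k} = \sqrt{\text{Sum}(k)}\,\ket{\phi_k} + \sqrt{1-\text{Sum}(k)}\,\ket{\phi_k^{\perp}},
\end{equation}
with $\ket{\phi_k^{\perp}}$ orthogonal to $\ket{\phi_k}$, so $\inn{\phi_k}{\Psi_k}=\sqrt{\text{Sum}(k)}$.

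Next I would construct the target-flagging oracle required by Lemma~\ref{lem: qaa}. Since the target subspace is the image of the projector $\ket{0}\bra{0}_{C_{1,k-1}}\otimes\ket{1}\bra{1}_{C_k}$, the oracle can be implemented as a single multi-controlled-NOT on an ancilla, using no queries to $O_p$ and only $\mathcal{O}(k)$ elementary gates. Taking $A$ to be the state preparation that outputs $\ket{\Psi_k}$, namely $U_k(\epsilon)$ applied after $O_p$, with $\sqrt{\lambda}=\sqrt{\text{Sum}(k)}$, Lemma~\ref{lem: qaa} produces a unitary $V$ satisfying $V\ket{0}=\ket{\phi'}\ket{0}$ and $|\inn{\phi_k}{\phi'}|^{2}\geq 1-\delta^{2}$, using $\mathcal{O}(\log(1/\delta)/\sqrt{\text{Sum}(k)})$ calls to $A$, $A^{\dagger}$ and to the flagging oracle.

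The last step is to tally the $O_p$-queries using Lemma~\ref{lem: Uke}: each invocation of $A$ costs $\mathcal{O}(2^k\log(1/\epsilon))$ queries to $O_p, O_p^{\dagger}$, so the total is $\mathcal{O}(\log(1/\delta)\cdot 2^k\log(1/\epsilon)/\sqrt{\text{Sum}(k)})$, which collapses to the claimed $\tilde{\mathcal{O}}(\log(1/\delta)\cdot 2^k/\sqrt{\text{Sum}(k)})$. The main technical subtlety will be the propagation of the $\epsilon$-error from Lemma~\ref{lem: W-operator} into $\ket{\Psi_k}$: the orthogonal decomposition above is exact only up to terms controlled by the $\beta_j$ approximation bounds, so I will need to choose $\epsilon$ polynomially small in $\delta$, $1/n$, and $2^{-k}$ to ensure that the amplified state stays $\delta$-close to $\ket{\phi_k}$. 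Since $\epsilon$ enters only logarithmically into the cost, this is absorbed into the $\tilde{\mathcal{O}}$ notation and does not affect the leading-order complexity.
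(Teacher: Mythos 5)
Your proposal is correct and follows essentially the same route as the paper: decompose $\ket{\Psi_k}$ orthogonally along the $C$ register so that $\inn{\phi_k}{\Psi_k}=\sqrt{\mathrm{Sum}(k)}$, flag the target component with a query-free oracle on the $C$ register, apply the fixed-point amplitude amplification of Lemma~\ref{lem: qaa} with the state-preparation unitary $U_k(\epsilon)$ acting on $O_p\ket{0}$, and tally the oracle cost via Lemma~\ref{lem: Uke}. One minor remark: the decomposition in Theorem~\ref{thm:sv-separation} is exact (the $B_k$ are built from the true amplitudes $\beta_j$, not their idealized indicator values), so the error-propagation caveat at the end of your argument is not needed for this theorem; the $\epsilon$-dependence only matters later when $B_k$ is compared against interval indicators in Theorem~\ref{thm: v}.
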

\begin{proof}
    The proof is elaborated in Appendix~\ref{appendix: pick}.
\end{proof}

Theorem~\ref{thm: pick} implies that states associated with $\ket{1}_{C_k}$ can be extracted. Since the amplification error $\delta$ only contributes a logarithmic term to the complexity, we ignore the effect of $\delta$. 

To estimate the value $v_k$, we employ the unitary $U_{S_k}^{(SV)},C_{\widetilde{\Pi}\otimes\ket{+}\bra{+}}NOT$ (in Equation~\ref{eq: QSVT-U-SV}) to $\ket{\phi_k}$ (in Equation~\ref{eq: pick}):

\begin{align}\label{eq: QSVT-amp-sv}
    &(C_{\widetilde{\Pi}\otimes\ket{+}\bra{+}}NOT)(U_{S_k}^{(SV)}\otimes I)(\ket{0}\ket{0}\ket{\phi_k}\ket{+}\ket{0}_F) \nonumber \\ &=\ket{1}_{C_k}\sum_{i=1}^n\frac{\sqrt{p_i}S_k(\sqrt{p_i})B_k(\sqrt{p_i})}{\sqrt{\text{Sum}(k)}}\ket{i}_A\ket{\psi_i}_B\ket{\text{garbage}_1}\ket{1}_F \nonumber \\ 
    &+\ket{1}_{C_k}\ket{\text{garbage}_2}\ket{0}_F.
\end{align}

The probability of measuring Equation~\ref{eq: QSVT-amp-SV} with $\ket{1}_{C_k}\ket{1}_F$ is $\sum_{i=1}^n p_iS_k(p_i)^2B_k(p_i)^2$, multiplying $\log\frac{1}{\varphi_{k+1}}$ we obtain $v_k$.

\begin{theorem}[Upper bound]\label{thm: upper}
    Let $p$ be a $n$-dimensional probability distribution. Given a quantum probability oracle $O_p$ for $p$, Algorithm~\ref{alg:Shannon} estimates the Shannon entropy $H(p)$ within $\epsilon$-additive error using 
    \begin{equation}
        \tilde{\mathcal{O}}\left(\frac{\sqrt{n}}{\epsilon}\right)
    \end{equation}
    queries of $O_p$ and $O_p^\dagger$.    
\end{theorem}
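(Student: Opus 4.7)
The plan is to combine Theorem~\ref{thm: v}, which expresses a classical $\epsilon$-approximator of $H(p)$ as a sum of $m=\tilde O(1)$ terms $v_k$, with the quantum subroutines built up in this section, and then sum the query cost of estimating each $v_k$. First, I would invoke Theorem~\ref{thm: v} with $m=\lceil\log(2n/\epsilon)\rceil$, $\eta=\epsilon/4$, and $\delta$ taken polynomially small in $n/\epsilon$; since $\delta$ enters only as $\log(1/\delta)$ in Lemmas~\ref{lem: W-operator} and~\ref{lem: Uke}, this choice costs only $\tilde O(1)$ factors while making both $m\delta^2$ and all ``cross-talk'' weights $B_k(\sqrt{p_i})^2$ for $\sqrt{p_i}\notin[\varphi_k,\varphi_{k-2})$ negligible. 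Consequently $|v-H(p)|\leq\epsilon/2$, and it suffices to approximate each $v_k$ to additive error $\epsilon/(16m)$.

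Next, I would estimate $v_k=\log(1/\varphi_{k+1})\cdot p_k\cdot\mathrm{Sum}(k)$ by running two separate amplitude estimations. The factor $p_k$ is the probability of the ancilla outcome $\ket{1}_F$ in equation~\ref{eq: QSVT-amp-sv}; estimating it within $\tilde O(\epsilon/(mk\,\mathrm{Sum}(k)))$ requires $M_1=\tilde O(mk\,\mathrm{Sum}(k)/\epsilon)$ applications of the preparation unitary by Lemma~\ref{lem: qae}, each costing $\tilde O(2^k/\sqrt{\mathrm{Sum}(k)})$ queries to $O_p,O_p^\dagger$ (Theorem~\ref{thm: pick} composed with a degree-$\tilde O(2^k)$ QSVT from Lemma~\ref{lem: QSVT-gilyen}). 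The factor $\mathrm{Sum}(k)$ is the probability of $\ket{\lambda_k}_C$ in $\ket{\Psi_k}$, estimated to additive error $\tilde O(\epsilon/(mk))$ with $M_2=\tilde O(mk\sqrt{\mathrm{Sum}(k)}/\epsilon)$ applications of $U_k(\delta)$, each costing $\tilde O(2^k)$ queries. Balancing the two error contributions through the product rule (using $v_k'=O(k)$) yields the precision choices above, and both branches contribute $\tilde O(mk\cdot 2^k\sqrt{\mathrm{Sum}(k)}/\epsilon)$ queries for index $k$.

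The crux is then to bound $\sum_{k=1}^m 2^k\sqrt{\mathrm{Sum}(k)}$. By Lemma~\ref{lem: mag} (together with the small choice of $\delta$), $B_k(\sqrt{p_i})^2$ is supported, up to negligible error, on indices with $\sqrt{p_i}\in[\varphi_k,\varphi_{k-2})$ and is at most $1$ there, so $\mathrm{Sum}(k)\leq\sum_{i:\,p_i\geq\varphi_k^2}p_i$. A counting argument bounds the number of such indices by $\min(n,1/\varphi_k^2)=\min(n,4^k)$, and each such $p_i$ is at most $\varphi_{k-2}^2=16/4^k$, yielding $\mathrm{Sum}(k)=O(\min(1,n/4^k))$ and hence $2^k\sqrt{\mathrm{Sum}(k)}=O(\min(2^k,\sqrt{n}))$. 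The resulting sum is geometric up to $k\approx\tfrac{1}{2}\log n$ and then has $\tilde O(1)$ additional copies of $\sqrt{n}$, totaling $\tilde O(\sqrt{n})$; multiplying by the $\tilde O(1/\epsilon)$ factor above delivers the claimed $\tilde O(\sqrt{n}/\epsilon)$ total.

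The main obstacle is precisely this $\mathrm{Sum}(k)\leq O(\min(1,n/4^k))$ bound: without the $n/4^k$ tail decay, the $2^k$ blow-up from singular value separation would sum to $\Omega(2^m)=\Omega(n/\epsilon)$, and the algorithm would not improve on the folklore $\tilde O(\sqrt{n}/\epsilon^{1.5})$ bound. A secondary subtlety is that $\mathrm{Sum}(k)$ is not known a priori when setting $M_1,M_2$; I would handle this by a preliminary coarse amplitude estimation of $\mathrm{Sum}(k)$ (or equivalently a doubling schedule on $M_1,M_2$), contributing only $\tilde O(1)$ overhead and not affecting the final complexity.
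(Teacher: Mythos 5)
Your proposal is correct and follows essentially the same route as the paper's proof: estimate each $v_k$ as the product of two amplitude estimations (the normalized QSVT-transformed amplitude on the amplified state $\ket{\phi_k}$, and $\mathrm{Sum}(k)$ from $\ket{\Psi_k}$), each costing $\tilde{\mathcal{O}}(2^k\sqrt{\mathrm{Sum}(k)}/\epsilon)$ queries, and then close the argument with the key bound $\mathrm{Sum}(k)=\mathcal{O}(n/4^k)$ so that the sum over $k$ is $\tilde{\mathcal{O}}(\sqrt{n})$. Your two refinements --- the counting argument giving $\mathrm{Sum}(k)=\mathcal{O}(\min(1,n/4^k))$ and the doubling schedule to handle the fact that $\mathrm{Sum}(k)$ is not known when setting the estimation precision --- are welcome but do not change the approach, which matches the paper's.
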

\begin{proof}
    The proof is elaborated in Appendix~\ref{appendix: upper}.
\end{proof}

Our algorithm is the first algorithm to estimate the Shannon entropy up to the Heisenberg limit and achieve square root dependency in terms of probability distribution size $n$.

\section{Lower bound}\label{sec: lower}
In this section, we prove that the upper bound established in Section~\ref{sec: upper} is essentially tight. Specifically, we show that any quantum algorithm estimating the Shannon entropy within additive error $\epsilon$ requires at least
\begin{equation}
    \Omega\left(\frac{\sqrt{n}}{\epsilon}\right)
\end{equation}
queries to the probability oracle $O_p$ and $O_p^\dagger$.

\begin{definition}[Classical distribution with discrete query-access]\label{def: doracle}
    A classical distribution $(p_i)^n_{i=1}$, has discrete query-access if we have classical / quantum query-access to a function $f: S\rightarrow [n]$ such that for all $i\in [n]$, $p_i = |{s\in[S] : f(s) = i}|/S$. In the quantum case a query oracle is a unitary operator $O$ acting on $\mathbb{C}^{|S|}\otimes\mathbb{C}^n$ as
    \begin{equation}
        O: \ket{s,0} \leftrightarrow \ket{s, f(s)} \quad \text{for all} \; s\in S.
    \end{equation}
\end{definition}
Note that if one first creates a uniform superposition over $S$ and then makes a query, then the above oracle turns into a quantum probability oracle as in Definition~\ref{def:qoracle}. Therefore, all lower bounds that are proven in this model also apply to the quantum probability oracle~\cite{gilyen2019distributional}. Lemma~\ref{lem: Hamming} proves the lower bound for obtaining the Hamming weight from a quantum oracle in Definition~\ref{def: doracle}. Lemma~\ref{lem: Shannon} proves the lower bound for estimating Shannon entropy within a constant error from a quantum oracle in Definition~\ref{def: doracle}.

\begin{lemma}\label{lem: Hamming}
    Let $x\in\{0,1\}^k$. Finding the Hamming weight $|x|$ requires $\Omega(k)$ quantum queries to a standard (binary) oracle for $x$~\cite{beals2001quantum}.
\end{lemma}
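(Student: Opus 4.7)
The plan is to reduce the Hamming weight problem to the well-studied PARITY lower bound. Since $\bigoplus_{i=1}^k x_i = |x| \bmod 2$, any quantum algorithm that outputs $|x|$ using $T$ queries to the binary oracle for $x$ immediately yields an algorithm that computes $\mathrm{PARITY}(x)$ using the same $T$ queries: one simply takes the least significant bit of the reported Hamming weight. It therefore suffices to invoke the quantum query lower bound $T = \Omega(k)$ for $\mathrm{PARITY}$ established by Beals--Buhrman--Cleve--Mosca--de Wolf, which is precisely the cited reference.

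For completeness, I would sketch the polynomial-method argument underpinning the $\mathrm{PARITY}$ bound. Every quantum algorithm making $T$ queries to a binary oracle has an acceptance probability that is a multilinear polynomial of degree at most $2T$ in the input bits $x_1,\ldots,x_k$. Hence the probability $p(x)\in[0,1]$ of outputting $1$ is such a polynomial. If the algorithm computes $\mathrm{PARITY}(x)$ with bounded error, then $p(x)\le 1/3$ when $|x|$ is even and $p(x)\ge 2/3$ when $|x|$ is odd.

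I would then apply Minsky--Papert symmetrization: averaging $p$ over all permutations of the $k$ coordinates produces a univariate polynomial $q(t)$ with $\deg q \le 2T$ such that $q(i)$ is the mean of $p(x)$ over all $x$ with $|x|=i$, for $i=0,1,\ldots,k$. The bounded-error conditions force $q(i)\le 1/3$ for even $i$ and $q(i)\ge 2/3$ for odd $i$ in $\{0,1,\ldots,k\}$, so by the intermediate value theorem $q(t)-1/2$ has at least $k$ real roots in $[0,k]$, yielding $\deg q \ge k$ and hence $T \ge k/2$. The only potentially subtle step is verifying that symmetrization preserves the degree bound, which is the classical content of the Beals et al.\ argument; no additional machinery is required beyond this reduction, so the main ``obstacle'' is really just citing the polynomial-method bound correctly.
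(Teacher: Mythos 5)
Your proposal is correct: the paper itself offers no proof of this lemma beyond the citation to Beals et al., and your reduction from \textsc{Parity} (via $|x|\bmod 2$) together with the symmetrization/polynomial-method sketch is exactly the standard argument underlying that citation. Nothing is missing; the bound $T\ge k/2$ you obtain gives the stated $\Omega(k)$.
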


\begin{lemma}[Corollary 74 of~\cite{bun2018polynomial}]\label{lem: Shannon}
    Let $R = t \cdot n$ for a sufficiently large constant t. Interpret an input in $[R]^n$ as a distribution $p$ in the natural way (i.e., for each $j \in [n]$, $p_j = f_j/R$, where $f_j$ is the number of times $j$ appears in the input). There is a constant $c>0$  such that any quantum algorithm that approximates the entropy of p up to additive error $c$ with probability at least $2/3$ requires $\Omega(\sqrt{n})$ queries.
\end{lemma}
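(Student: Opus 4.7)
My plan is to establish the $\Omega(\sqrt{n})$ lower bound via the polynomial method, reducing to a symmetric two-distribution distinguishing task. First I would construct two distributions on $[n]$ whose Shannon entropies differ by a constant: the uniform distribution $p^{(1)}$ with $p^{(1)}_i = 1/n$ (entropy $\log n$) and the distribution $p^{(2)}$ uniform on a fixed subset of size $n/2$ (entropy $\log n - 1$). Both are representable in the discrete query-access model of Definition~\ref{def: doracle} with $R = tn$, and any algorithm estimating $H(p)$ to additive error smaller than $1/2$ with success probability $2/3$ must distinguish them with bounded error. It therefore suffices to lower-bound the quantum query complexity of this decision task.

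Next I would apply the standard symmetrization of Beals--Buhrman--Cleve--Mosca--de~Wolf: the acceptance probability of any $T$-query algorithm is a multilinear polynomial of degree at most $2T$ in the indicator variables of the oracle truth-table of $f : S \to [n]$. Because both target distributions are invariant under permutations of $[n]$, averaging the polynomial over the symmetric-group action collapses it to a univariate polynomial $q(y)$ in a single integer parameter capturing the ``profile'' of $f$ (for example, the number of $j \in [n]$ with $f^{-1}(j) = \emptyset$). The resulting $q$ must be bounded in $[0,1]$ on all admissible profile values while separating the two target profiles by $\Omega(1)$.

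The final step is to lower bound the degree of any such $q$ by $\Omega(\sqrt{n})$. I would invoke Paturi-style extremal inequalities for polynomials on integer lattices, or equivalently the dual-polynomial certificates developed in~\cite{bun2018polynomial} for the approximate degree of symmetric properties built from collision-like primitives; an alternative route is to reduce from a promise variant of collision or element distinctness, whose quantum query complexity is known to be $\Omega(n^{1/2})$ in this profile-based encoding (as opposed to the weaker $\Omega(n^{1/3})$ for the standard encoding). The main obstacle is precisely this last step: one must carefully identify the set of admissible profiles produced by the two families, and exhibit the right dual witness ruling out a low-degree interpolant that separates them. This is the technical heart of~\cite{bun2018polynomial}, which exploits block-composition arguments together with Markov-type inequalities for approximating polynomials; beyond citing that machinery I would not expect to add anything new.
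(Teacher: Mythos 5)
First, note that the paper does not prove Lemma~\ref{lem: Shannon} at all: it is imported verbatim as Corollary 74 of~\cite{bun2018polynomial}, so there is no in-paper argument to compare against. Judged on its own terms, your blind reconstruction has a genuine gap at the very first step, the choice of hard instances. The pair you propose --- uniform on $[n]$ versus uniform on a subset of size $n/2$, encoded with $R=tn$ as a $t$-to-one versus a $2t$-to-one function --- is distinguishable with only $O(n^{1/3})$ quantum queries: query $K=n^{1/3}$ random domain points, then use amplitude estimation over the remaining domain to decide whether roughly $Kt$ or roughly $2Kt$ points collide with the stored set, at cost $O(\sqrt{n/K})=O(n^{1/3})$. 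Since an explicit $O(n^{1/3})$ algorithm exists for this promise problem, no lower-bound technique can extract $\Omega(\sqrt{n})$ from it; any reduction built on this pair caps out at $\Omega(n^{1/3})$. This is not an artifact of your later steps: the symmetrization you describe collapses the acceptance probability to a polynomial over the \emph{partition} of $R$ into preimage sizes (not a single integer), and restricting to a one-parameter family through these two profiles lands you exactly in the Aaronson--Shi collision regime, whose Markov/Paturi-type analysis yields degree $\Theta(n^{1/3})$ for such regular-function instances.

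The $\Omega(\sqrt{n})$ bound of~\cite{bun2018polynomial} genuinely requires their dual-polynomial machinery for block-composed functions applied to a more delicate family of distributions, engineered so that the constant entropy gap is not detectable from low-order collision statistics; your closing sentence correctly identifies this as the technical heart, but the concrete construction you set up before reaching it would not feed into that machinery. If you only need the statement for the purposes of this paper, the honest move is to do what the authors do and cite Corollary 74 directly rather than attempt a self-contained proof.
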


We combine Lemma~\ref{lem: Hamming} and Lemma~\ref{lem: Shannon} for the lower bound of estimating Shannon entropy within a desired error $\epsilon$. We design a quantum oracle where its probabilities are the Hamming weight of an different quantum oracle.

\begin{theorem}
    Let $\epsilon>0$. Any algorithm that (with success probability at least $\frac{2}{3}$) for every $n$-dimensional probability distribution $p$ outputs $H(p)$ within $\epsilon$-additive error, using queries to a quantum probability oracle for $p$, uses at least $\Omega(\frac{\sqrt{n}}{\epsilon})$ such queries.
\end{theorem}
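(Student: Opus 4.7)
The plan is to lift the constant-error hardness of Lemma~\ref{lem: Shannon} by a factor of $1/\epsilon$, using Lemma~\ref{lem: Hamming} to pay for the additional precision. As noted immediately after Definition~\ref{def: doracle}, any lower bound proved in the discrete-oracle model transfers to the quantum probability oracle model, so I would work in the discrete-oracle setting throughout.

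I would construct a joint family of discrete oracles whose induced distributions $p_{y,x}$ are parameterised by a Lemma~\ref{lem: Shannon}-hard instance $y \in [R]^n$ and an auxiliary binary string $x \in \{0,1\}^k$. A natural choice is a mixture: enlarge the sample space from $[R]$ to $[R]\times[k]$, and let the oracle map $(s,j)$ to $y(s) \in [n]$ when $x_j = 1$ and to a fresh sink label $n+1$ when $x_j = 0$. The induced distribution then places mass $|x|/k$ on the Lemma~\ref{lem: Shannon}-distribution $p_y$ and mass $1-|x|/k$ on the sink, and a short calculation gives $H(p_{y,x}) = h_2(|x|/k) + (|x|/k)\,H(p_y)$. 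Choosing $k = \Theta(H(p_y)/\epsilon) = \Theta(\log n / \epsilon)$ makes the entropy gap between consecutive Hamming weights of $x$ exceed $2\epsilon/3$, so that any algorithm that estimates $H(p_{y,x})$ within $\epsilon/3$ must determine $|x|$ exactly while simultaneously decoding $H(p_y)$ to within a constant.

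To turn the two separate hardnesses $\Omega(\sqrt n)$ (from deciding the Lemma~\ref{lem: Shannon} instance $y$) and $\Omega(k) = \Omega(1/\epsilon)$ (from learning $|x|$ via Lemma~\ref{lem: Hamming}) into the multiplicative target $\Omega(\sqrt n / \epsilon)$, I would replicate the Hamming sub-problem across the $\Theta(\sqrt n)$ heavy columns that already drive the Lemma~\ref{lem: Shannon} construction, so that recovering the entropy forces an exact Hamming-weight determination on each of $\Theta(\sqrt n)$ disjoint bit-blocks of the oracle. A quantum direct-product argument applied to these independent Lemma~\ref{lem: Hamming}-hard subproblems then yields total cost $\Omega(\sqrt n \cdot k) = \Omega(\sqrt n / \epsilon)$, and transferring back through the $O(1)$-cost simulation of the probability oracle by the discrete oracle delivers the claim.

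The main obstacle is precisely this multiplicative combination. A naive hybrid or minimax argument only produces $\Omega(\max(\sqrt n,\,1/\epsilon))$, so the construction must simultaneously ensure that the entropy is sensitive to each individual Hamming-weight sub-problem at the $\epsilon$-level — without violating the logarithmic ceiling on $|\partial H/\partial p_i|$ — and that each quantum query reveals only $O(1)$ bits of joint information about the $\sqrt n$ sub-problems. Verifying that a strong quantum direct-product theorem (or a suitably weighted adversary bound) applies to this joint instance, rather than only to its marginals, is the delicate step on which the entire argument rests.
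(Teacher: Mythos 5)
Your construction has the right raw ingredients --- encoding information in Hamming weights of $k$-bit blocks, and diluting the distribution with a sink element so that a constant-error question about the hidden instance becomes an $\epsilon$-error question about the entropy of the oracle's distribution --- and your diagnosis of the single-global-$x$ version is also right: it only yields $\Omega(\max(\sqrt n,1/\epsilon))$. But the step you yourself flag as ``delicate'' is exactly where the proof lives, and the tool you reach for there does not close it. A direct-product theorem lower-bounds the cost of \emph{outputting} the answers to $\Theta(\sqrt n)$ independent Hamming-weight instances; an entropy estimator outputs a single real number and is under no obligation to solve any individual block, so you would still owe a reduction showing that an $\epsilon$-accurate estimate of $H$ lets one recover each block's weight. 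That reduction is false as stated: $H(p)$ is a symmetric, highly non-injective function of the block weights, so no accuracy of the estimate pins down the individual $|x^{(i)}|$. Restricting the Hamming sub-problems to $\Theta(\sqrt n)$ ``heavy columns'' also changes the instance family, and it is not clear the Lemma~\ref{lem: Shannon} hardness survives that restriction. As written, the argument therefore has a genuine gap at its multiplicative core.

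The paper closes that gap by a different mechanism. It attaches a fresh $k$-bit string $x^{(i)}$ with $k=\Theta(1/\epsilon)$ to \emph{every} coordinate $i\in[n]$, sets $f_i=|x^{(i)}|$ and $p_i=f_i/R$ with $R=\sum_i f_i=tn$, and views constant-error estimation of $H(p)$ as the \emph{composition} of the Lemma~\ref{lem: Shannon}-hard outer problem (a function of $(f_1,\dots,f_n)$) with the Lemma~\ref{lem: Hamming}-hard inner problem (Hamming weight of a $k$-bit string). Multiplicativity of quantum query complexity under composition~\cite{kimmel2012quantum} then gives $\Omega(\sqrt n\,k)$ bit-queries outright, with no direct-product argument and no need for the entropy value to identify any block. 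The last step is your dilution idea executed per coordinate: the distribution $q_i=f_i/(nk)$, $q_{n+1}=1-t/k$ is samplable with a single bit-query (pick a uniformly random $(i,j)$ and output $i$ if $x^{(i)}_j=1$, else $n+1$), and $H(q)=\tfrac{t}{k}H(p)+B(\tfrac{t}{k})$, so a $\Theta(1/k)=\Theta(\epsilon)$-accurate estimate of $H(q)$ yields a constant-accurate estimate of $H(p)$ and hence costs $\Omega(\sqrt n\,k)=\Omega(\sqrt n/\epsilon)$ oracle calls. Note also that your choice $k=\Theta(\log n/\epsilon)$ comes from wanting to read $|x|$ off the entropy exactly, which the composition route never requires; $k=\Theta(1/\epsilon)$ suffices.
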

\begin{proof}
    We acknowledge that the proof is similarly constructed  as Lemma 11 of~\cite{van2021quantum}, which proves the $\ell_1$-norm estimation lower bound.  

    Let 
    \begin{equation}
        k=\Theta\left(\frac{1}{\epsilon}\right)
    \end{equation}
    and
    \begin{equation}
        x^{(1)},\dots,x^{(n)}\in\{0,1\}^k, \quad x=\{x^{(1)}, x^{(2)},\dots,x^{(n)}\},
    \end{equation}
     and $t$ be a known constant such that $R=\sum_i|x^{(i)}|=tn$, where $|x^{(i)}|$ is the Hamming weight of $x^{(i)}$. Define
     \begin{equation}
         f_i=|x^{(i)}|,\;\; p_i=\frac{f_j}{R}
     \end{equation}
     as in Lemma~\ref{lem: Shannon}. We will explore the problem of estimating $H(p)=-p_i\log p_i$ with constant error $c$ in Lemma~\ref{lem: Shannon}.
    
    To estimate $H(p)$, we should retrieve $f_i$, in order to access the probability $p_i$. By Lemma~\ref{lem: Hamming}, finding the Hamming weight $f_i$ (or accessing a quantum analogue) requires $\Omega(k)$ queries. We further note that any algorithm that estimates $H(p)$ with constant error $c$ requires $\Omega(\sqrt{n})$ queries using Lemma~\ref{lem: Shannon}. Since quantum query complexity is multiplicative under composition~\cite{kimmel2012quantum} it follows
    that estimating $H(p)$ with constant error $c$, requires
    \begin{equation}
        \Omega(\sqrt{n}k)
    \end{equation}
    queries to $x$.

    Now we construct a slightly different quantum oracle using $x$. Define 
    \begin{equation}
        q=\{q_i\}, \;q_i=\frac{f_i}{nk}
    \end{equation}
    for $i\leq n$ and $q_{n+1}=1-\frac{R}{nk}=1-\frac{t}{k}$. We can sample from $q$ using a classical algorithm.
    \begin{enumerate}
        \item Pick a uniformly random $i\in[n]$.
        \item Pick a uniformly random $j\in[k]$.
        \item If $x_j^{(i)}=1$ return $i$, if $x_j^{(i)}=0$ return $n+1$.
    \end{enumerate}
    By replacing the uniformly random picks by the creation of a uniform superposition we get a
    quantum probability oracle for $q$. Now let us calculate $H(q)$ as follows

    \begin{align}
        H(q) &= -\sum_j \frac{f_j}{nk}\log\frac{f_j}{nk}-\left(1-\frac{t}{k}\right)\log\left(1-\frac{t}{k}\right) \nonumber \\
        &= \frac{t}{k}H(p)+B\left(\frac{t}{k}\right).
    \end{align}

    Since we know the constant $t$, we can retrieve $H(p)$ from $H(q)$ using the relation below, i.e,
    \begin{equation}\label{eq: shannon-relation}
        H(p) = \frac{k}{t}\left(H(q)-B(\frac{t}{k})\right).
    \end{equation}

    If we estimate $H(q)$ with $\frac{ct}{k}=\Theta(\frac{1}{k})=\Theta(\epsilon)$ error, we can estimate $H(p)$ with constant error $c$ using Equation~(\ref{eq: shannon-relation}), which requires $\Omega(\sqrt{n}k)=\Omega(\frac{\sqrt{n}}{\epsilon})$ queries to the quantum oracle. $H(q)$ is Shannon entropy of $n+1$-dimensional probability distribution $q$, and estimating it with $\epsilon$-additive error requires $\Omega(\frac{\sqrt{n}}{\epsilon})$. So we can conclude that any algorithm estimating Shannon entropy with $\epsilon$-additive error requires $\Omega(\frac{\sqrt{n}}{\epsilon})$ queries to a quantum probability oracle.
\end{proof}

\section{Discussion} \label{sec: discussion}
The paper establishes a tight bound for estimating the Shannon entropy, up to logarithmic factors. We introduce the singular value separation algorithm to separate the eigenvalues $p_i$ and encode their information into auxiliary control qubits. By applying quantum amplitude amplification and QSVT to the separated quantum state, we efficiently estimate the Shannon entropy within an additive error $\epsilon$, requiring only $\mathcal{\tilde{O}}(\frac{\sqrt{n}}{\epsilon})$ queries to the quantum probability oracle. To prove the lower bound, we construct a quantum oracle where the probability distribution is encoded via Hamming weights in an independent oracle. We conclude that any algorithm outputting $H(p)$ within additive error $\epsilon$ must make at least $\Omega(\frac{\sqrt{n}}{\epsilon})$ queries to the quantum probability oracle.

We anticipate that our algorithmic framework can improve various property testing and estimation problems, such as Rényi and von Neumann entropy estimation. This leads to several open questions for future work:
\begin{itemize}
    \item Can our framework improve the upper bound for von Neumann entropy estimation?
    \item Can it be used to establish tight bounds for Rényi entropy estimation?
    \item Can the advantages of the singular value separation algorithm be leveraged to estimate distance measures such as fidelity, trace distance, and relative entropies?
\end{itemize}



\section*{data availability}
The data supporting the results of this manuscript are given in the article and the appendix. Extra data are available upon reasonable request.

\section*{Acknowledgments}
We acknowledge helpful discussions with Junseo Lee and Mingyu Lee. This work was supported by the National Research Foundation of Korea (NRF) through a grant funded by the Ministry of Science and ICT (Grant No. RS-2025-00515537). This work was also supported by the Institute for Information \& Communications Technology Promotion (IITP) grant funded by the Korean government (MSIP) (Grant Nos. RS-2019-II190003 and RS-2025-02304540), the National Research Council of Science \& Technology (NST) (Grant No. GTL25011-401), and the Korea Institute of Science and Technology Information (KISTI) (Grant No. P25026). 

\section*{Author Contribution}
M.S. conceived and initiated the idea. M.S. and K.J. led and analyzed the main results. All authors wrote and reviewed the manuscript.

\section*{Competing interests}
The authors declare no competing interests.

\newpage

\begin{widetext}

\appendix
\section*{Appendix}

\bigskip
\begin{center}
    {\large\bf Near optimal quantum algorithm for estimating shannon entropy} \\
    \bigskip
    {\large Myeongjin Shin and Kabgyun Jeong}
\end{center}
\subsection{Proof for Theorems and Lemmas}\label{appendix: A}
We give elaborated proofs to Lemma~\ref{lem:W_operator_formal}, Theorem~\ref{thm:sv-separation}, Theorem ~\ref{thm: v}, Theorem~\ref{thm: pick}, and Theorem~\ref{thm: upper}.
\subsubsection{Proof of Lemma~\ref{lem:W_operator_formal}}\label{appendix: W-operator}
Let us recall Lemma~\ref{lem: sv-separation} and set $\widetilde{\Pi},\Pi$ as
\begin{align}
    \widetilde{\Pi}&=\sum_{i=1}^n I\otimes\ket{i}\bra{i}\otimes\ket{i}\bra{i} \\
    \Pi&=\ket{0}\bra{0}\otimes\ket{0}\bra{0}\otimes I,
\end{align}
and $U=O_p$, then $A$ becomes
\begin{equation}
\sum_{i=1}^n\sqrt{p_i}\ket{i}\bra{0}\otimes\ket{\psi_i}\bra{0}\otimes\ket{i}\bra{i}.
\end{equation}
Then, there is a unitary $W(\varphi_j,\epsilon)$ using $\mathcal{O}(\frac{1}{\varphi_j}\log\frac{1}{\epsilon})$ queries to $O_p$ and $O_p^\dagger$ such that
\begin{equation}
    W(\varphi_j,\epsilon)\ket{0}_{C_j}\ket{0}_{P_j}(\ket{0}\ket{0}\ket{i})_{I_j} = \beta_0\ket{0}_{C_j}\ket{\gamma}_{P_j,I_j}+\beta_1\ket{1}_{C_j}\ket{+}_P(\ket{0}\ket{0}\ket{i})_{I_j},
\end{equation}
where $|\beta_0|^2+|\beta_1|^2=1$, such that
\begin{itemize}
    \item if $0\leq\sigma_i\leq\varphi_j$, then $|\beta_1|\leq\epsilon$ and
    \item if $2\varphi_j\leq\sigma_i\leq1$, then $|\beta_0|\leq\epsilon$.
\end{itemize}

Let $\beta_0=\beta_j^{'}(\sqrt{p_i})$ and $\beta_1=\beta_j(\sqrt{p_i})$ then Lemma~\ref{lem:W_operator_formal} is proved.

\subsubsection{Proof of Theorem~\ref{thm:sv-separation}}\label{appendix: sv-separation}
Let us prove that $\ket{\Psi_k}$ defined as
\begin{equation}
    \ket{\Psi_k}=U_k(\epsilon)(\ket{0}_C\ket{0}_P\ket{0}_I\ket{\psi}_{AB})=\prod_{j=1}^k C_jW(\varphi_j,\epsilon)C_jT)(\ket{0}_C\ket{0}_P\ket{0}_I\ket{\psi}_{AB}
\end{equation}
can be represented as
\begin{equation}
    \ket{\Psi_k}=\ket{0}_C\ket{0}_{I_{k+1, m}}\ket{0}_{P_{k+1, m}}\sum_{i=1}^n\sqrt{p_i}B^{'}_k(\sqrt{p_i})\ket{i}_A\ket{\psi_i}_B\ket{\text{garbage}}+\sum_{j=1}^k\ket{\lambda_j}_C\left(\sum_{i=1}^n\sqrt{p_i}B_j(\sqrt{p_i})\ket{i}_A\ket{\psi_i}_B\ket{g_{i,j}}_{P,I}\right).
\end{equation}

Let us use mathematical induction for the proof. For $k=1$, the following holds
\begin{align}
    \ket{\Psi_1} &= C_1W(\varphi_1,\epsilon)C_1T(\ket{0}_C\ket{0}_P\ket{0}_I\ket{\psi}_{AB}) \\
    &= W(\varphi_1,\epsilon)T_1(\ket{0}_C\ket{0}_P\ket{0}_I\ket{\psi}_{AB}) \\
    &= W(\varphi_1,\epsilon)(\ket{0}_C\ket{0}_P\ket{0}_{I_{2,m}}\sum_{i=1}^n\sqrt{p_i}(T_1(\ket{0,0,0}_{I_1}\ket{i}_A))\ket{\psi_i}_{B}) \\
    &= W(\varphi_1,\epsilon)(\ket{0}_C\ket{0}_P\ket{0}_{I_{2,m}}\sum_{i=1}^n\sqrt{p_i}\ket{0,0,i}_{I_1}\ket{i}_A\ket{\psi_i}_{B}) \\
    &= \ket{0}_{C_{2,m}}\ket{0}_{P_{2,m}}\ket{0}_{I_{2,m}}\sum_{i=1}^n\sqrt{p_i}\ket{i}_A\ket{\psi_i}_{B}(W(\varphi_1,\epsilon)\ket{0}_{C_1}\ket{0}_{P_1}\ket{0,0,i}_{I_1})
\end{align}

Then, by applying Lemma~\ref{lem:W_operator_formal}, we have
\begin{align}
    \ket{\Psi_1} &= \ket{0}_{C_{2,m}}\ket{0}_{P_{2,m}}\ket{0}_{I_{2,m}}\sum_{i=1}^n\sqrt{p_i}\ket{i}_A\ket{\psi_i}_{B}(\beta_1^{'}(\sqrt{p_i})\ket{0}_{C_1}\ket{\gamma}_{P_1,I_1}+\beta_1(\sqrt{p_i})\ket{1}_{C_1}\ket{+}_{P_1}(\ket{0,0,i})_{I_j}) \\
    &= \ket{0}_{C}\ket{0}_{P_{2,m}}\ket{0}_{I_{2,m}}\sum_{i=1}^n\sqrt{p_i}\beta_1^{'}(\sqrt{p_i})\ket{i}_A\ket{\psi_i}_{B}\ket{\gamma}_{P_1,I_1} + \ket{1}_{C_1}\sum_{i=1}^n\sqrt{p_i}\beta_1(\sqrt{p_i})\ket{i}_A\ket{\psi_i}_{B}\ket{\text{garbage}}_{P,I},
\end{align}
which proves the $k=1$ case.

Next, suppose that the $k-1$ case holds. Then,
\begin{equation}\label{eq:induction}
    \ket{\Psi_{k-1}}=\ket{0}_C\ket{0}_{I_{k, m}}\ket{0}_{P_{k, m}}\sum_{i=1}^n\sqrt{p_i}B^{'}_{k-1}(\sqrt{p_i})\ket{i}_A\ket{\psi_i}_B\ket{\text{garbage}}+\sum_{j=1}^{k-1}\ket{\lambda_j}_C(\sum_{i=1}^n\sqrt{p_i}B_j(\sqrt{p_i})\ket{i}_A\ket{\psi_i}_B\ket{g_{i,j}}_{P,I}).
\end{equation}

Let us prove the $k$ case. We can easily show the following.
\begin{align}\label{eq:induc-relation}
    \ket{\Psi_{k}} &= C_kW(\varphi_k,\epsilon)C_kT\ket{\Psi_{k-1}},
\end{align}

above $C_kW(\varphi_k,\epsilon)C_kT\ket{\Psi_{k-1}}$ only acts when all qubits in the register $C_1,C_2,\dots,C_{k-1}$ are $\ket{0}$. So, $C_kW(\varphi_k,\epsilon)C_kT\ket{\Psi_{k-1}}$ only acts to
\begin{align}
    \ket{0}_C\ket{0}_{I_{k, m}}\ket{0}_{P_{k, m}}\sum_{i=1}^n\sqrt{p_i}B^{'}_{k-1}(\sqrt{p_i})\ket{i}_A\ket{\psi_i}_B\ket{\text{garbage}} \\
    = \ket{0}_{C_{1,k-1}}\ket{0}_{C_{k+1,m}}\ket{0}_{I_{k+1, m}}\ket{0}_{P_{k+1, m}}\sum_{i=1}^n\sqrt{p_i}B^{'}_{k-1}(\sqrt{p_i})\ket{0}_{C_k}\ket{0}_{P_k}\ket{0,0,0}_{I_k}\ket{i}_A\ket{\psi_i}_B\ket{\text{garbage}}
\end{align}
in Equation~(\ref{eq:induction}). 

Since $W(\varphi_k,\epsilon)$ acts on $(C_k,P_k,I_k)$ and $T_k$ acts on $(I_k,A)$, we focus on the state
\begin{align}
    & W(\varphi_k,\epsilon)T_k\sum_{i=1}^n\sqrt{p_i}B^{'}_{k-1}(\sqrt{p_i})\ket{0}_{C_k}\ket{0}_{P_k}\ket{0,0,0}_{I_k}\ket{i}_A\ket{\psi_i}_B \\
    &= W(\varphi_k,\epsilon)\sum_{i=1}^n\sqrt{p_i}B^{'}_{k-1}(\sqrt{p_i})\ket{0}_{C_k}\ket{0}_{P_k}\ket{0,0,i}_{I_k}\ket{i}_A\ket{\psi_i}_B \\
    &= \sum_{i=1}^n\sqrt{p_i}B^{'}_{k-1}(\sqrt{p_i})(\beta_k^{'}(\sqrt{p_i})\ket{0}_{C_k}\ket{\gamma}_{P_k,I_k}+\beta_k(\sqrt{p_i})\ket{1}_{C_k}\ket{+}_{P_k}(\ket{0,0,i})_{I_k})\ket{i}_A\ket{\psi_i}_B \\
    &= \sum_{i=1}^n\sqrt{p_i}B^{'}_{k-1}(\sqrt{p_i})\beta_k^{'}(\sqrt{p_i})\ket{0}_{C_k}\ket{\gamma}_{P_k,I_k}\ket{i}_A\ket{\psi_i}_B + \sum_{i=1}^n\sqrt{p_i}B^{'}_{k-1}(\sqrt{p_i})\beta_k(\sqrt{p_i})\ket{1}_{C_k}\ket{+}_{P_k}(\ket{0,0,i})_{I_k}\ket{i}_A\ket{\psi_i}_B \\
    &= \ket{0}_{C_k}\sum_{i=1}^n\sqrt{p_i}B^{'}_{k}(\sqrt{p_i})\ket{i}_A\ket{\psi_i}_B\ket{\text{garbage}_1}_{P_k,I_k} + \ket{1}_{C_k}\sum_{i=1}^n\sqrt{p_i}B_{k}(\sqrt{p_i})\ket{i}_A\ket{\psi_i}_B\ket{g_{i,k}}_{P_k,I_k}. 
\end{align}

Finally, integrating the above equation into Equations~(\ref{eq:induction}),~(\ref{eq:induc-relation}), we have
\begin{align}
    \ket{\Psi_{k}}&=\ket{0}_C\ket{0}_{I_{k+1, m}}\ket{0}_{P_{k+1, m}}\sum_{i=1}^n\sqrt{p_i}B^{'}_{k}(\sqrt{p_i})\ket{i}_A\ket{\psi_i}_B\ket{\text{garbage}} + \ket{\lambda_k}_{C}\sum_{i=1}^n\sqrt{p_i}B_{k}(\sqrt{p_i})\ket{i}_A\ket{\psi_i}_B\ket{g_{i,k}}_{P,I} \\ 
    & +\sum_{j=1}^{k-1}\ket{\lambda_j}_C(\sum_{i=1}^n\sqrt{p_i}B_j(\sqrt{p_i})\ket{i}_A\ket{\psi_i}_B\ket{g_{i,j}}_{P,I}) \\
    &= \ket{0}_C\ket{0}_{I_{k+1, m}}\ket{0}_{P_{k+1, m}}\sum_{i=1}^n\sqrt{p_i}B^{'}_{k}(\sqrt{p_i})\ket{i}_A\ket{\psi_i}_B\ket{\text{garbage}}+\sum_{j=1}^{k}\ket{\lambda_j}_C(\sum_{i=1}^n\sqrt{p_i}B_j(\sqrt{p_i})\ket{i}_A\ket{\psi_i}_B\ket{g_{i,j}}_{P,I}).
\end{align}
So, the case $k$ holds. By mathematical induction, we conclude the proof.

\subsubsection{Proof of Lemma~\ref{lem: mag}}\label{appendix: mag}
Since $x\in[\varphi_j,\varphi_{j-1})$ by Lemma~\ref{lem:W_operator_formal} we have
\begin{itemize}
    \item $\beta_1(x)^2,\dots,\beta_{j-1}(x)^2\leq\epsilon^2$ and
    \item $\beta_{m+1}(x)^2\geq 1-\epsilon^2$.
\end{itemize}
So, $B_{j+1}(x)^2=(\prod_{i=1}^{j}\beta_i^{'}(x)^2)\beta_{j+1}(x)^2=\prod_{i=1}^{j}\beta_i^{'}(x)^2-O(\epsilon^2)$. Then, 
\begin{equation}
    B_j(x)^2+B_{j+1}(x)^2 = (\prod_{i=1}^{j-1}\beta_i^{'}(x)^2)(\beta_j(x)^2+\beta_j^{'}(x)^2) + \mathcal{O}(\epsilon^2) = \prod_{i=1}^{j-1}\beta_i^{'}(x)^2 + \mathcal{O}(\epsilon^2).
\end{equation}
Since $\beta_1(x)^2,\dots,\beta_{j-1}(x)^2\leq\epsilon^2$, we have $\beta_1^{'}(x)^2,\dots,\beta_{j-1}^{'}(x)^2\geq1-\epsilon^2$. Then,
\begin{equation}
    B_j(x)^2+B_{j+1}(x)^2 = \prod_{i=1}^{j-1}(1-\epsilon^2) + \mathcal{O}(\epsilon^2) = 1-\mathcal{O}(j\epsilon^2).
\end{equation}

\subsubsection{Proof of Lemma~\ref{lem: Uke}}\label{appendix: Uke}
Each query to $W(\varphi_j,\epsilon)$ requires 

\begin{equation}
    \mathcal{O}(\frac{1}{\varphi_j}\log\frac{1}{\epsilon})=\mathcal{O}(2^j\log\frac{1}{\epsilon})
\end{equation}
queries to $O_p$ and $O_p^\dagger$ (Lemma~\ref{lem: sv-separation}), which is also equivalent to querying $C_jW(\varphi_j,\epsilon)$. So, summing of $j=1,2,\dots,k$, we get 

\begin{equation}
    \sum_{j=1}^k\mathcal{O}(2^j\log\frac{1}{\epsilon})=\mathcal{O}(2^k\log\frac{1}{\epsilon}).
\end{equation}

\subsubsection{Proof of Theorem~\ref{thm: v}}\label{appendix: v}
\begin{proof}
    Suppose that $x\in[\varphi_j,\varphi_{j-1})$, then
    \begin{itemize}
        \item $B_j(x)^2+B_{j+1}^2(x)=1-\mathcal{O}(m\delta^2)$,
        \item $B_1(x)^2+\dots+B_{j-1}(x)^2+B_{j+2}(x)^2+\dots+B_m(x)^2=\mathcal{O}(m\delta^2)$,
        \item $\left|4S_j(\frac{x}{2})^2\log\frac{1}{\varphi_{j+1}}-\log\frac{2}{x}\right|\leq \eta$, $\left|4S_{j+1}(\frac{x}{2})^2\log\frac{1}{\varphi_{j+2}}-\log\frac{2}{x}\right|\leq \eta$. (Definition~\ref{def: poly-S})
    \end{itemize}

By using the above relations, we deduce
\begin{align}
    \left|4\sum_{k=1}^m S_k(x)^2B_k(x)^2\log\frac{1}{\varphi_{k+1}}-\log\frac{2}{x}\right| 
    &= \left|\sum_{k\in[m]/\{j,j+1\}} 4S_k(x)^2B_k(x)^2\log\frac{1}{\varphi_{k+1}}+\sum_{k=j}^{j+1} 4S_k(x)^2B_k(x)^2\log\frac{1}{\varphi_{k+1}}-\log\frac{2}{x}\right| \\
    &\leq \mathcal{O}(m\delta^2)\sum_{k\in[m]/\{j,j+1\}} 4S_k(x)^2\log\frac{1}{\varphi_{k+1}} + \left|\sum_{k=j}^{j+1} B_k(x)^2(\log\frac{2}{x}+\mathcal{O}(\eta))-\log\frac{2}{x}\right| \\
    &\leq \mathcal{O}(m\delta^2)\log\frac{1}{\varphi_{m+1}}+\mathcal{O}(\eta) = \mathcal{\tilde{O}}(m^2\delta^2+\eta).
\end{align}

Suppose that $x\in[0,\varphi_{m})$, then $x\leq\frac{1}{2^m}$. So finally we deduce
\begin{align}
    v &= 2\sum_{i=1}^np_i\left(\sum_{k=1}^m 4S_k(\sqrt{p_i})^2B_k(\sqrt{p_i})^2\log\frac{1}{\varphi_{k+1}}\right)-1 \nonumber\\
    &= 2\sum_{\sqrt{p_i}\geq\varphi_m} p_i\left(\log\frac{2}{\sqrt{p_i}}-1+\mathcal{\tilde{O}}(m\delta^2+\eta)\right) + \sum_{\sqrt{p_i}<\varphi_m}p_i\mathcal{\tilde{O}}\left(\log\frac{1}{\varphi_{m+1}}\right) \nonumber \\
    &= H(p)+\mathcal{\tilde{O}}\left(m\delta^2+\eta+\frac{n}{2^m}\right).
\end{align}
\end{proof}

\subsubsection{Proof of Theorem~\ref{thm: pick}}\label{appendix: pick}
Suppose that $\ket{\Psi_k}$ is prepared and we measure the qubit register $C_k$ with the computational basis. The measurement outputs $\ket{1}_{C_k}$ with 
\begin{equation}
    \text{Sum}(k) = \sum_{i=1}^n p_iB_k(\sqrt{p_i})^2
\end{equation}
probability. The post-measurement state of $\ket{\Psi_k}$ becomes
\begin{equation}
    \ket{\phi_k}=\ket{1}_{C_k}\sum_{i=1}^n\frac{\sqrt{p_i}B_k(\sqrt{p_i})}{\sqrt{\text{Sum}(k)}}\ket{i}_A\ket{\psi_i}_B\ket{\text{garbage}}.
\end{equation}

Let $\ket{S}=\ket{\Psi_k}$ and $\ket{T}=\ket{\phi_k}$, then
\begin{equation}
    \ket{S}=\sqrt{\text{Sum}(k)}\ket{T}+\sqrt{1-\text{Sum}(k)}\ket{\tilde{T}}
\end{equation}
for some $\ket{\tilde{T}}$. The qubit register $C_k$ of $\ket{\tilde{T}}$ is $\ket{0}_{C_k}$. There exists a unitary $U$ such that $U\ket{1}_{C_k}\ket{b}=\ket{1}_{C_k}\ket{b\oplus1}$ and $U\ket{0}_{C_k}\ket{b}=\ket{1}_{C_k}\ket{b}$ for $\inn{T}{\tilde{T}}=0$. Then, we have
\begin{align}
    U\ket{T}\ket{b}&=\ket{T}\ket{b\oplus1}, \;\text{and} \\ 
    U\ket{\tilde{T}}\ket{b}&=\ket{\tilde{T}}\ket{b}.
\end{align}

So we can apply the quantum amplitude amplification (Lemma ~\ref{lem: qaa}) to $\ket{\Psi_k}$ and construct the unitary $V$ satisfying 
\begin{align}
& V \ket{0} = \ket{\phi'}\ket{0}, \\
& \big|\inn{\phi'}{\phi_k}\big|^2 \geq 1 - \delta^2,
\end{align}
using 
\begin{equation}
    \mathcal{O}\left(\log\left(\frac{1}{\delta}\right)\frac{1}{\sqrt{\text{Sum}(k)}}\right)
\end{equation}
queries to $U_k(\epsilon)$.

By Lemma~\ref{lem: Uke}, we can query to $U_k(\epsilon)$ by using
\begin{equation}
    \mathcal{O}\left(2^k\log\frac{1}{\epsilon}\right)
\end{equation}
queries to $O_p$ and $O_p^\dagger$. 

So we conclude that, the unitary $V$ can be implemented thorough
\begin{equation}
    \mathcal{\tilde{O}}\left(\log\left(\frac{1}{\delta}\right)\frac{2^k}{\sqrt{\text{Sum}(k)}}\right)
\end{equation}
queries to $O_p$ and $O_p^\dagger$.

\subsubsection{Proof of Theorem~\ref{thm: upper}}\label{appendix: upper}
Note that we can estimate the Shannon entropy $H(p)$ with $v$, which are represented as
\begin{equation}
    v=-2+\sum_{k=1}^m 8v_k,
\end{equation}
where $v_k$ is
\begin{equation}
    v_k=\sum_{i=1}^n p_iS_k(\sqrt{p_i})^2B_k(\sqrt{p_i})^2\log\frac{1}{\varphi_{k+1}}.
\end{equation}

So our task is to estimate $v_1,v_2,\dots,v_m$. To estimate the value $v_k$, we employ the unitary $U_{S_k}^{(SV)},C_{\widetilde{\Pi}\otimes\ket{+}\bra{+}}NOT$ (in Equation~\ref{eq: QSVT-U-SV}) to $\ket{\phi_k}$ (in Equation~\ref{eq: pick}):

\begin{align}\label{eq: appendix-QSVT-amp-SV}
    &(C_{\widetilde{\Pi}\otimes\ket{+}\bra{+}}NOT)(U_{S_k}^{(SV)}\otimes I)(\ket{0}\ket{0}\ket{\phi_k}\ket{+}\ket{0}_F) \nonumber \\ &=\ket{1}_{C_k}\sum_{i=1}^n\frac{\sqrt{p_i}S_k(\sqrt{p_i})B_k(\sqrt{p_i})}{\sqrt{\text{Sum}(k)}}\ket{i}_A\ket{\psi_i}_B\ket{\text{garbage}_1}\ket{1}_F \nonumber \\ 
    &+\ket{1}_{C_k}\ket{\text{garbage}_2}\ket{0}_F.
\end{align}

The probability of measuring Equation~\ref{eq: appendix-QSVT-amp-SV} with $\ket{1}_{C_k}\ket{1}_F$ is $\sum_{i=1}^n p_iS_k(p_i)^2B_k(p_i)^2$, multiplying $\log\frac{1}{\varphi_{k+1}}$ we obtain $v_k$.

The degree of $S_k$ is $\mathcal{\tilde{O}}(\frac{1}{\varphi_{k+1}})$ by Definition~\ref{def: poly-S}. Hence, we can construct one query to $U_{S_k}^{(SV)}$ with $\mathcal{\tilde{O}}(\frac{1}{\varphi_{k+1}})=\mathcal{\tilde{O}}(2^k)$ queries to $O_p$ and $O_p^\dagger$. Also, we can construct $\ket{\phi_k}$ with $\mathcal{\tilde{O}}(\frac{2^k}{\sqrt{\text{Sum}(k)}})$ queries to $O_p, O_p^\dagger$ by Theorem~\ref{thm: pick}. So, adding the required number of queries, we can conclude that equation~\ref{eq: appendix-QSVT-amp-SV} can be obtained with 
\begin{equation}
    \mathcal{\tilde{O}}\left(\frac{2^k}{\sqrt{\text{Sum}(k)}}\right)
\end{equation}
queries to $O_p$ and $O_p^\dagger$. 

We can apply quantum amplitude estimation to Equation~(\ref{eq: appendix-QSVT-amp-SV}) and obtain the value
\begin{equation}\label{eq: mul1}
    \frac{1}{\text{Sum}(k)}\sum_{i=1}^n p_iS_k(\sqrt{p_i})^2B_k(\sqrt{p_i})^2
\end{equation}
within additive error $\frac{\epsilon}{2m\text{Sum}(k)}$ from  as setting the ``answer" state to $\ket{1}_{C_k}\ket{1}_F$ using
\begin{equation}
    \mathcal{\tilde{O}}\left(\frac{2^k}{\sqrt{\text{Sum}(k)}}\frac{m\text{Sum(k)}}{\epsilon}\right)=\mathcal{\tilde{O}}\left(\frac{m2^k\sqrt{\text{Sum}(k)}}{\epsilon}\right)
\end{equation}
queries to $O_p$ and $O_p^\dagger$. This is because to apply quantum amplitude estimation within additive error $\frac{\epsilon}{2m\text{Sum}(k)}$, it requires $\mathcal{O}\left(\frac{m\text{Sum}(k)}{\epsilon}\right)$ queries to the unitaries that constructs Equation~(\ref{eq: appendix-QSVT-amp-SV}).

Also quantum amplitude estimation can be employed to obtain the value
\begin{equation}\label{eq: mul2}
    \text{Sum}(k)
\end{equation} 
within additive error $\frac{\epsilon}{2m}$ from $\ket{\Psi_k}$ in Equation~(\ref{eq: sv}) as setting the ``answer" state to $\ket{1}_{C_k}$ using 
\begin{equation}
    \mathcal{\tilde{O}}\left(\frac{m2^k\sqrt{\text{Sum}(k)}}{\epsilon}\right)
\end{equation}
queries to $O_p$ and $O_p^\dagger$. This is because $U_k(\epsilon)$ query complexity is $\mathcal{\tilde{O}}(2^k)$ and quantum amplitude estimation requires $\mathcal{O}\left(\frac{m\sqrt{\text{Sum}(k)}}{\epsilon}\right)$ queries to $U_k(\epsilon)$ and its inverse. 

Multiplying the above estimated values Equation~\ref{eq: mul1},\ref{eq: mul2} and $\log\frac{1}{\varphi_{k+1}}$ deduces
\begin{equation}
    v_k=\sum_{i=1}^n p_iS_k(\sqrt{p_i})^2B_k(\sqrt{p_i})^2\log\frac{1}{\varphi_{k+1}}.
\end{equation}
within additive error $\frac{\epsilon}{m}$.

Summing of $v_k$ for $k=1,2,\dots,m$, we finally get
\begin{equation}
    v=-2+\sum_{k=1}^m 8v_k.
\end{equation}
within additive error $\epsilon$. Since $v$ is an adequate approximation of $H(p)$, thus the complexity of estimating $v$ and $H(p)$ within additive error $\epsilon$ is equivalent as proved in Theorem~\ref{thm: v}. 

Therefore, $\text{Sum}(k)$ is bounded by
\begin{align}
    \text{Sum}(k)=\sum_{i=1}^np_iB_k(\sqrt{p_i})^2 \leq \sum_{i=1}^n \left(\frac{1}{2^{k-2}}\right)^2\leq \frac{4n}{4^k},
\end{align}
because $B_k(x)<\mathcal{O}(m\delta^2)$ when $x>\frac{1}{2^{k-2}}$.

The total complexity is 
\begin{equation}
    \mathcal{\tilde{O}}\left(\sum_{k=1}^m\frac{m2^k\sqrt{\text{Sum}(k)}}{\epsilon}\right)=\mathcal{\tilde{O}}\left(\sum_{k=1}^m\frac{m\sqrt{n}}{\epsilon}\right)=\mathcal{\tilde{O}}\left(\frac{\sqrt{n}m^2}{\epsilon}\right)=\mathcal{\tilde{O}}\left(\frac{\sqrt{n}}{\epsilon}\right).
\end{equation}
Because $m=\mathcal{\tilde{O}}(\log\frac{\epsilon}{n})$, and we neglect the logarithmic factors.

\end{widetext}

\bigskip

\end{document}